\crefname{section}{Sec.}{Secs.}
\Crefname{section}{Section}{Sections}
\newtheorem{thm}{Theorem}
\newtheorem{lem}{Lemma}
\crefname{cor}{Cor.}{Cors.} 
\DeclareMathOperator{\dist}{dist}
\DeclareMathOperator{\Cut}{\text{\rm Cut}}
\DeclareMathOperator{\e}{\text{\rm e}}
\begin{document}

\title{Limits of Short-Time Evolution of Local Hamiltonians}

\author{Ali Hamed Moosavian}
\email{ali@phanous.ir}
\affiliation{QuOne Lab, Phanous Research and Innovation Centre, Tehran, Iran}
\orcid{0000-0002-1818-5037}
\author{Seyed Sajad Kahani}
\email{sajad@phanous.ir}
\author{Salman Beigi}
\email{salman@phanous.ir}
\affiliation{QuOne Lab, Phanous Research and Innovation Centre, Tehran, Iran}
\orcid{0000-0003-3588-4662}

\maketitle

\begin{abstract}

Evolutions of local Hamiltonians in short times are expected to remain local and thus limited. 
In this paper, we validate this intuition by proving some limitations on  short-time evolutions of local time-dependent Hamiltonians. We show that the distribution of the measurement output of short-time (at most logarithmic) evolutions of local Hamiltonians are \emph{concentrated} and satisfy an \emph{isoperimetric inequality}. To showcase explicit applications of our results, we study the \normalsize{M}\scriptsize{AX}\normalsize{C}\scriptsize{UT} \normalsize{problem and conclude that  quantum annealing needs at least a run-time that scales logarithmically in the problem size to beat classical algorithms on }\normalsize{M}\scriptsize{AX}\normalsize{C}\scriptsize{UT}\normalsize{. To establish our results, we also prove a Lieb-Robinson bound that works for time-dependent Hamiltonians which might be of independent interest. }
\end{abstract}

\section{\label{sec:intro}Introduction}

The Quantum Annealing (QA) algorithm \cite{Kadowaki1998} was first introduced as a general purpose optimization algorithm for quantum computers in \cite{Farhi2000}. The QA algorithm is based on the adiabatic theorem in quantum mechanics \cite{Kato1950}. The adiabatic theorem states that if we initialize a system in its ground state, and then evolve the Hamiltonian of the system slowly enough, as long as the spectral gap of the Hamiltonian does not close, the system remains close to the ground state of the system at all times \cite{Born1928,Albash2018}. Because of its simple conceptual structure, QA has been implemented on many different platforms and applied to a variety of different problems, see \emph{e.g.}\@ \cite{hen2016, puri2017, lechner2015, jiang2018, li2018, stella2005, titiloye2011, mott2017, pudenz2015, perdomo2012, pudenz2014, martovnak2004, adachi2015, johnson2011, boixo2013}.

In this manuscript, we  focus on short-time QA algorithms. This setup is  motivated by the current state of the art quantum technologies (see methods of \cite{King2022}). On one hand, we still lack universal quantum computers capable of  simulating general QA algorithms \cite{Google,wright2019benchmarking}, and on the other hand the analog QA devices lack fault-tolerance. Therefore, the total time of reliably performing a continuous computation on these analog devices is bounded by their decoherence time. 

By considering short-time quantum annealing, we exit the scope of the adiabatic theorem. Hence, our results are not sensitive to the Hamiltonian gap. In this context, the algorithms considered in our work are sometimes called "diabatic quantum annealing" in the literature (E.\@g.\@ see \cite{Crosson2021} and references therein). 

Quantum Approximate Optimization Algorithm (QAOA)  is another general purpose quantum algorithm that can yield approximate solutions to many optimization problems, particularly combinatorial ones \cite{Farhi2014}. Based on QAOA's local nature, recent developments have shown that  low-depth QAOAs are unable to produce better approximate solutions to some optimization problems than the best known classical algorithms \cite{Farhi2020a,Farhi2020b,Bravyi2019,Bravyi2021,Bravyi2020}.

In particular, Bravyi et al \cite{Bravyi2019} use the \emph{isoperimetric inequality} of \cite{Eldar2017} that puts restrictions on the output distributions of all low-depth circuits, to show that low-depth (at most logarithmic) QAOA for \textsc{MaxCut} does not outperform best classical algorithms on \emph{Ramanujan graphs}. Also, Farhi et al~\cite{Farhi2020a} show that low-depth QAOA for the problems of \textsc{MaxCut} and Maximum Independent Set, does not perform well on random regular bipartite graphs. 

Thinking of the correspondence between short-time QA algorithms and low-depth quantum circuits, particularly low-depth QAOAs, it is natural to ask whether the above bounds hold for QA algorithms or not.  In this paper, we show that most of the aforementioned results on limitations of QAOAs still hold for short-time QA algorithms as well. Throughout this article we use short-time to represent times that scale at most logarithmically with the system size. We first show that the isoperimetric inequality of~\cite{Eldar2017} remains to hold for output distributions of short-time QA computations. Then, using this inequality as a tool (besides others), we generalize the results of~\cite{Bravyi2019} and~\cite{Farhi2020a} to the QA case.   

QA and QAOA can be thought of two instances of optimal control problems. In QA we look for an optimal path in the space of Hamiltonians from a starting Hamiltonian to a target one, while in QAOA we look for optimal rotation angles of some local circuits.  
Applying \emph{Pontryagin's maximum principle}, some authors have argued that the optimal solution to the first control theory problem takes the so called \emph{bang-bang} cond problem. However, it was shortly noted that in some common problem instances the control Hamiltonian of the system becomes \emph{singular}, a condition that was presumed to be extremely rare, and then the optimal control protocol is no longer of the bang-bang form~\cite{Brady2020,Brady2021,Venuti2021}. This means that, in general, the optimal control problem for QA does not reduce to that of the QAOA, and needs  thoughtful considerations. In fact, the observations of~\cite{Brady2020,Brady2021,Venuti2021} show that we cannot directly use the results on the obstructions of QAOAs, to prove the aforementioned limitations of short-time QA algorithms.  

Another idea for relating results on low-depth quantum circuits (QAOAs) to QA algorithms is \emph{trotterization} (see~\cite{Childs2020} and references therein), which approximates the unitary evolution of a local Hamiltonian with a quantum circuit. Although one can imagine this approach working in some very specific instances, in general the trotterized circuit must have an asymptotic depth that grows polynomially with the number of qubits. This prohibits the application of the aforementioned QAOA results, \cite{Farhi2020a,Farhi2020b,Bravyi2019,Bravyi2021,Bravyi2020}, as they only apply to logarithmic depths.

We use \emph{Lieb-Robinson bounds} as a main tool for proving our results \cite{Nachtergaele2006,Nachtergaele2010,Bravyi2006,Chen2019}. To this end, we first prove a Lieb-Robinson bound that works for time-dependent Hamiltonians and is amenable to QA algorithms. To the best of our knowledge, this is the first time-dependent Lieb-Robinson bound. 

As another technical tool, we use the notion of \emph{$\gamma_2$-norm}, also called \emph{factorization norm}, from matrix analysis to bound the derivative of certain functions on the space of operators. This enables us to bound certain errors induced by applying the Lieb-Robinson approximation.

Here is a summary of our results and the structure of the paper. In \cref{sec: Tools} we state our version of the Lieb-Robinson bound that is a fundamental tool in our work. Then in \cref{sec: concentration of measure} we outline the proof of two different bounds on the output distributions of short-time evolutions of local time-dependent Hamiltonians. Our first bound is a generalization of the well-known \emph{Chebyshev inequality} which says that the Hamming weight of the measurement outcome of a short-time local Hamiltonian evolution is \emph{concentrated} around its mean. The next bound, that is independent of the first one, is a generalization of the isoperimetric inequality of \cite{Eldar2017} for short-time Hamiltonian evolutions. Then,  in \cref{sec: Koenig}, we take the approach of \cite{Bravyi2019}, and show that the $\mathbb Z_2$-symmetry of the \textsc{MaxCut} Hamiltonian limits the performance of QA algorithms on Ramanujan graphs. To prove this result we use our isoperimetric inequality. Next, in \cref{sec: QA of Farhi and Gamarnik and Gutmann} , we prove a similar limitation on random regular bipartite graphs  using the Lieb-Robinson bound. This result is an extension of the result of \cite{Farhi2020a} for QA algorithms.
Final remarks come in \cref{sec: conclusion}, and many of the detailed proofs are left for Supplemental materials. 

\section{Lieb-Robinson Bound}\label{sec: Tools}

One of our main tools in this manuscript is the Lieb-Robinson bound \cite{Lieb1972}. Roughly speaking, it states that the evolution of a local observable in the Heisenberg picture, remains almost local in short times. We use the following version of the Lieb-Robinson bound that notably works for \emph{time-dependent Hamiltonians} as well. 

To state our theorem we need a definition. Given a graph $G$ and a subset of its vertices $A$, we denote
the set of vertices on the $L$-\emph{boundary} of $A$  by 
$\partial_L (A)$. That is,
\begin{align}
\partial_L (A)=& \{x\in A:\, \exists y\in A^c, \dist_G(x, y)\leq L\}\\
&~\cup \{x\in A^c:\, \exists y\in A, \dist_G(x, y)\leq L\},
\end{align}
where $A^c$ is the subgraph consisting of vertices not in $A$ and $\dist_G(x,y)$ is the graph distance between vertices $x$ and $y$.

\begin{thm}[Lieb-Robinson Bound]
    \label{thm: lieb-robinson}
    Let $G$ be a graph that may contain loops and parallel edges with maximum degree $\Delta>1$. 
    Let $\vb H(t)$ be a time-dependent Hamiltonian defined on vertices of $G$ of the form
    \begin{equation} 
    \label{eq:hamiltonian-t}
    \vb H(t) = \sum_{e}  u_e(t)\vb h_{e}, \end{equation}
    where $\vb h_{e}$ is time-independent and acts non-trivially only on the two ends of edge $e$ in $G$. Let $\vb U(t)$ be the unitary evolution associated with $\vb H(t)$. Suppose that $\norm{u_e(t)\vb h_{e}} \le g$ for any $e$ and $0\leq t\leq T$. Throughout this paper $\norm{.}$ represents the spectral norm unless it is clearly defined otherwise. Let $\vb O_A$ be an observable acting on region $A$ of the graph. For $L>1$ let 
    \begin{equation} \vb{\widetilde{H}}_A(t) := \sum_{e \subseteq A\cup \partial_L (A)} u_e(t)\vb h_e, \end{equation}
    be the Hamiltonian consisting of terms in the region \(A\cup \partial_L (A)\), and let $\vb V_A(t)$ be the unitary evolution associated to $\vb{\widetilde{H}}_A(t)$.
    Then we have
    \begin{widetext}
    \begin{equation} \label{eq: Local Operator Difference}
        \norm{\vb U^\dagger(T) \vb O_A\vb U(T)-\vb V^\dagger_A(T) \vb O_A\vb V_A(T)} \le \sqrt{\frac{2}{\pi}} \,|A| \cdot \|\vb O_A\| \cdot  \e^{-L\big(\log L - \log T - \log (4 g(\Delta-1))\big) - \frac12 \log L}\, . 
    \end{equation}
  \end{widetext}

\end{thm}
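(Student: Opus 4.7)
The plan is to compare the full evolution $\vb U(T)$ to the restricted one $\vb V_A(T)$ by Duhamel's principle and then iterate to exploit the locality of $\vb H(t)$. Setting $\vb W(t) := \vb V_A^\dagger(t)\vb U(t)$, one has $\vb U = \vb V_A\vb W$ and $\dot{\vb W} = -i\vb V_A^\dagger(t)[\vb H(t)-\vb{\widetilde H}_A(t)]\vb V_A(t)\,\vb W(t)$ with $\vb W(0)=I$, so
\begin{equation*}
  \vb U^\dagger(T)\vb O_A\vb U(T) - \vb V_A^\dagger(T)\vb O_A\vb V_A(T) = \vb W^\dagger(T)\widetilde{\vb O}\,\vb W(T) - \widetilde{\vb O},
\end{equation*}
where $\widetilde{\vb O} := \vb V_A^\dagger(T)\vb O_A\vb V_A(T)$. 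Integrating the derivative of $\vb W^\dagger(t)\widetilde{\vb O}\vb W(t)$ over $[0,T]$ and taking norms bounds the left-hand side by
\begin{equation*}
  \int_0^T \sum_{e\not\subseteq A\cup\partial_L(A)}\! |u_e(t)|\,\bigl\|[\vb h_e,\,\vb V_A(t)\vb V_A^\dagger(T)\vb O_A\vb V_A(T)\vb V_A^\dagger(t)]\bigr\|\,dt,
\end{equation*}
and every edge $e$ appearing in the sum has an endpoint at distance greater than $L$ from $A$.

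Next I would bound each commutator by expanding the inner Heisenberg-evolved operator as a Dyson series in $\vb{\widetilde H}_A$. Order $n$ contributes a nested commutator $\bigl[\vb h_e,[\vb h_{e_n},\ldots,[\vb h_{e_1},\vb O_A]\ldots]\bigr]$ integrated over an ordered time simplex. Such a term vanishes unless the edges $e_n,\ldots,e_1,e$ form a walk in the line graph of $G$ starting at a vertex of $A$; since $e$ lies outside the $L$-neighborhood of $A$, every non-vanishing walk has length at least $L$, and all terms with $n<L$ drop out identically.

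For the surviving terms, the number of walks of length $n$ in the line graph rooted at $A$ is at most $|A|\bigl(2(\Delta-1)\bigr)^n$; each nested commutator has norm at most $2^n\|\vb O_A\|\prod_i\|u_{e_i}(s_i)\vb h_{e_i}\|$; the uniform hypothesis $\|u_e(t)\vb h_e\|\le g$ replaces each coefficient by $g$; and the ordered time integration contributes $T^n/n!$. Combining these estimates and noting that the resulting series is dominated by its $n=L$ term gives an upper bound essentially of the form $|A|\,\|\vb O_A\|\,(4g(\Delta-1)T)^L/L!$. Applying Stirling's lower bound $L!\ge\sqrt{2\pi L}(L/\e)^L$ to this term then yields the pre-factor $\sqrt{2/(\pi L)}$ and the exponent $-L\bigl(\log L - \log T - \log(4g(\Delta-1))\bigr)$ asserted in \cref{eq: Local Operator Difference}.

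The principal obstacle is executing the nested-commutator iteration cleanly in the time-dependent setting while respecting time-ordering. The uniform bound $\|u_e(t)\vb h_e\|\le g$ is essential here: it decouples the walk-counting from the schedule $u_e(\cdot)$ and lets the $n$-fold ordered integration collapse to $T^n/n!$ irrespective of how $u_e(t)$ varies in time, so that the estimate reduces to the same combinatorial sum as in the time-independent case. A secondary subtlety is the combinatorial accounting; to obtain a $(\Delta-1)$ (rather than $\Delta$) dependence one must iterate in the edge-adjacency (line) graph, where the branching factor at each step is $2(\Delta-1)$.
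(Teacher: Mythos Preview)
Your first Duhamel step is fine and close in spirit to the paper's opening move. The gap is in the second step. When you Dyson-expand $\vb V_A(t)\vb V_A^\dagger(T)\vb O_A\vb V_A(T)\vb V_A^\dagger(t)$ and look at the nested commutator $[\vb h_e,[\vb h_{e_n},\ldots,[\vb h_{e_1},\vb O_A]\ldots]]$, the non-vanishing condition is \emph{not} that $(e_1,\ldots,e_n,e)$ form a walk in the line graph. The correct condition is only that $e_1$ touches $A$, that $e_2$ touches $A\cup e_1$, that $e_3$ touches $A\cup e_1\cup e_2$, and so on---a ``connected cluster'' condition where each new edge may attach anywhere to the growing support, not just to the previous edge. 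The number of such ordered sequences is bounded by $\prod_{i=0}^{n-1}(|A|+i)\Delta$, which for $|A|=O(1)$ behaves like $\Delta^n\,n!$. That factorial cancels the $T^n/n!$ coming from time ordering, and the resulting tail $\sum_{n\ge L}(c\,g\Delta T)^n$ does not decay in $L$. So the bound $(4g(\Delta-1)T)^L/L!$ does not follow from the Dyson expansion as you have set it up, and your walk count $|A|(2(\Delta-1))^n$ is simply bounding the wrong set.

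The paper avoids this by never writing a full Dyson series. Instead it proves a one-step Gr\"onwall-type inequality (its Lemma~2, essentially your Duhamel identity rearranged),
\[
\bigl\|\vb U^\dagger \vb O_A \vb U-\vb V^\dagger \vb O_A \vb V\bigr\|\ \le\ 2\|\vb O_A\|\sum_{e\sim A}\int_0^t|u_e(\tau)|\,\bigl\|\vb U^\dagger \vb h_e\vb U-\vb V^\dagger \vb h_e\vb V\bigr\|\,d\tau,
\]
and then \emph{iterates this inequality $L$ times}. The crucial point is that after one step the role of $\vb O_A$ is played by a \emph{single} edge operator $\vb h_{e_1}$, so the next application sums only over $e_2\sim e_1$ with $e_2\neq e_1$, not over all edges touching the accumulated support. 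This is precisely what produces genuine walks in the line graph and the $(2(\Delta-1))^L$ count you were aiming for. After $L$ iterations the remaining quantity $\|\vb U^\dagger \vb h_{e_L}\vb U-\vb V^\dagger \vb h_{e_L}\vb V\|$ is bounded trivially by $2g$, yielding $2\|\vb O_A\|\,|A|\,(4g(\Delta-1)T)^L/L!$ and then the stated exponent via Stirling. If you want to salvage your outline, replace the Dyson expansion of the truncated dynamics by this recursive bound on the \emph{difference} $\vb U^\dagger(\cdot)\vb U-\vb V^\dagger(\cdot)\vb V$; the uniform hypothesis $\|u_e(t)\vb h_e\|\le g$ then plays exactly the role you describe in decoupling the walk count from the schedule.
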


Intuitively speaking, this theorem says that, as $\vb O_A$ is local, it remains \emph{almost} local under a short-time evolution of a local Hamiltonian. Indeed, since $\widetilde{\vb H}_A(t)$ is non-trivial only in the region $A\cup \partial_L (A)$, then $\vb V_A(T)$ also acts non-trivially only on this region. Therefore, by this theorem as it is schematically depicted in \cref{fig: Lieb-Robinson}, $\vb U^\dagger(T) \vb O_A\vb U(T)$ is approximated by $\vb V^\dagger_A(T) \vb O_A\vb V_A(T)$ that acts only on $A\cup \partial_L (A)$.

\begin{figure}[ht]
\centering
\includegraphics[width=6cm]{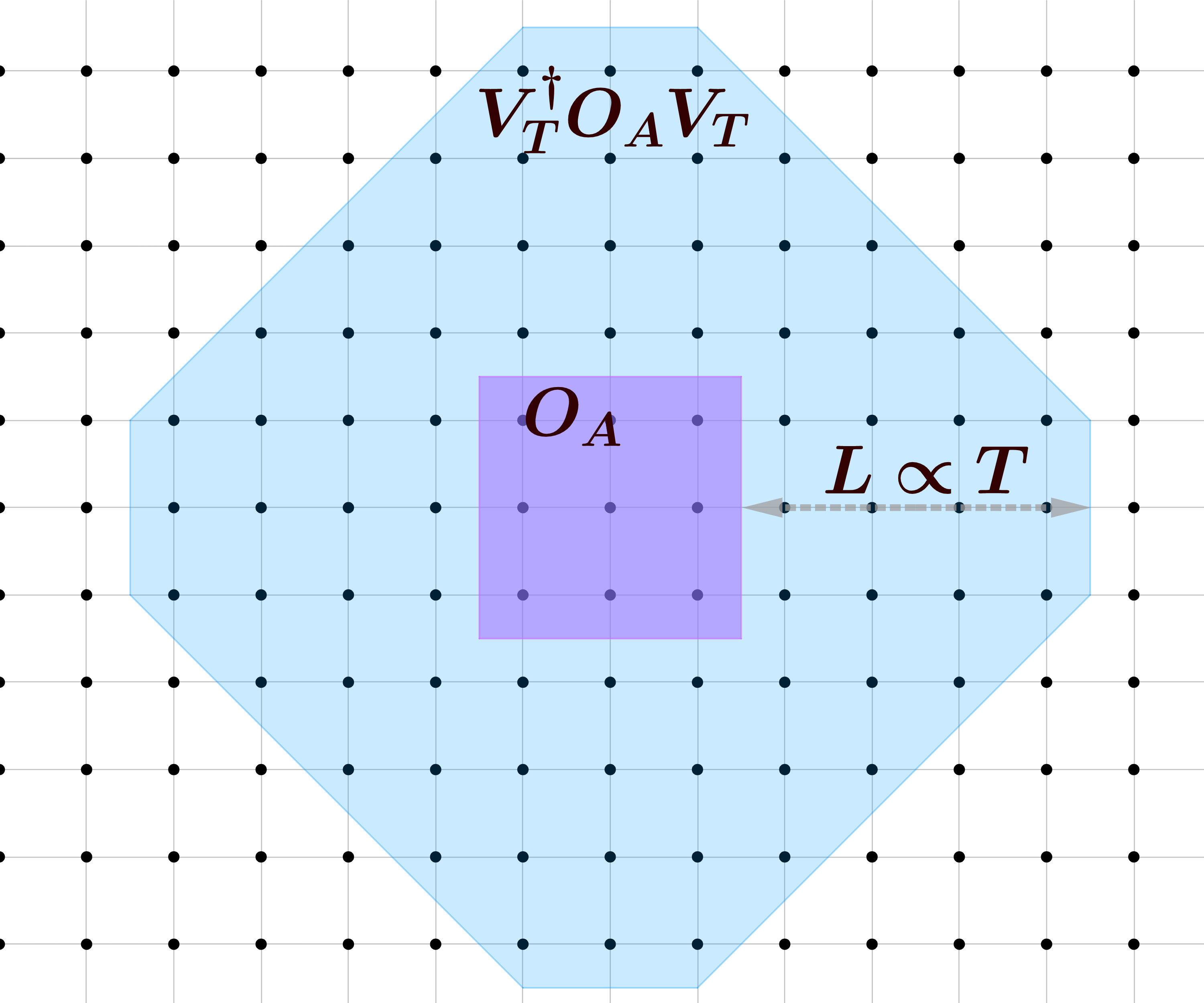}
\caption{(Color online) Illustration of  \cref{thm: lieb-robinson}. Here, $A$ is the region on which the operator $\vb O_A$ acts, and the larger shaded area represents $\partial_L(A)$. $\vb V^\dagger_A(T)\vb O_A \vb V_A(T)$ acts non-trivially on $A\cup \partial_L(A)$ and approximates $\vb U^\dagger (T) \vb O_A \vb U(T)$, the time evolution of $\vb O_A$ in the Heisenberg picture.}
\label{fig: Lieb-Robinson}
\end{figure}

Some remarks are in line regarding this theorem. First,  as the proof of this theorem given in \cref{app: Lieb-robinson} shows, the above Lieb-Robinson bound can easily be generalized for local Hamiltonians with $k$-body interactions, by replacing the term $\log(4g(\Delta-1))$ on the right hand side of \cref{eq: Local Operator Difference} with $\log(2kg(\Delta-1))$. Second, this theorem in the case where $\vb H$ is time-independent is proven in~\cite{Haah2018,Nachtergaele2010,Bravyi2006}. Third, the type of time-dependent Hamiltonians (\Cref{eq:hamiltonian-t}) considered by this theorem are motivated by some quantum annealing applications in mind. Nevertheless, note that in principle any time-dependent Hamiltonian can be written in this form; simply consider the expansion of each local term of the Hamiltonian in the Pauli basis.

\section{Limits on the output distribution of  short-time evaluations}\label{sec: concentration of measure}

In this section we present our main results, namely, how the output distribution of a short-time quantum annealing algorithm is constrained. Our first main result is a generalization of the well-known \emph{Chebyshev inequality} in probability theory that is a concentration of measure inequality. Our second result is an \emph{isoperimetric inequality}. A similar result for low-depth circuits has been proven in~\cite{Eldar2017}. 

In the following we assume that $\vb H(t)$ is a time-dependent Hamiltonian defined on the vertices of a graph $G$ as in \cref{eq:hamiltonian-t}. We assume that starting with a \emph{product state} 
\begin{align}\label{eq:psi-0-product}
\ket{\psi_0}=\ket{s_1}\otimes \ket{s_2}\otimes\cdots \otimes\ket{s_n},
\end{align}
the system evolves under the Hamiltonian $\vb H(t)$. Let
\begin{align}
    \ket{\psi_t} = \vb U(t) \ket{\psi_0},
\end{align}
where as before $\vb U(t)$ is the associated unitary evolution. We assume that at time $T$ we make a measurement in the computational basis to a get a distribution $p(\vb x)$ on $\{0,1\}^n$. 

\subsection{A Chebyshev bound\label{subsec: concentration}}

\begin{thm}[Quantum Chebyshev bound]\label{thm:Chebyshev}
    Let $\vb H(t)$ be a Hamiltonian on a graph with maximum degree $\Delta>1$ as in \cref{eq:hamiltonian-t} with $\|u_e(t)\vb h_e\|\leq g$ for any edge $e$. Let 
    \begin{align}
    T\le \frac {\kappa_1}{8g\Delta^{\frac{2-\kappa_1}{\kappa_1}}\log \Delta}\log n,
    \end{align}
    where $0<\kappa_1<1$ is an arbitrary constant. Let $p(\vb x)$ be the output distribution of measuring $\ket{\psi_T} = \vb U(T)\ket{\psi_0}$ in the computational basis. We assume $\ket{\psi_0}$ to be a product state as in \cref{eq:psi-0-product}. Then for any $c>0$ and $\kappa_1/2<\kappa_2<1/2$ we have
    \begin{equation}\begin{aligned}
        & \Pr\left[\left|w_H(\vb x)-\left(\frac n 2 - \frac m 2\right)\right|\ge cn^{\frac 1 2 + \kappa_2}\right] \\ &\le \frac{3}{2c^2}n^{-(2\kappa_2-\kappa_1)},
    \end{aligned}\end{equation}
    where $w_H(\vb x)$ is the hamming weight of $\vb x\in \{0,1\}^n$, i.e. number of non-zero coordinates of $\vb x$, and $m=\sum_i \bra{\psi_T}\mathbf Z_i \ket{\psi_T}$ with $\vb Z$ being the Pauli-$z$ operator.

\end{thm}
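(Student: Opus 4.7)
The plan is to recast the problem in terms of the Pauli-$Z$ sum and apply the classical Chebyshev inequality, using the Lieb-Robinson bound of \cref{thm: lieb-robinson} to control the quantum variance.

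First I would write $w_H(\vb x) = \frac{n}{2} - \frac{1}{2}\sum_i x_i^{(z)}$, where $x_i^{(z)} = \pm 1$ is the outcome of measuring $\vb Z_i$, so that $w_H$ corresponds to the observable $\frac{1}{2}(n \vb I - \vb W)$ with $\vb W := \sum_i \vb Z_i$. The expectation of this observable in $\ket{\psi_T}$ is exactly $\tfrac{n-m}{2}$. The Chebyshev inequality then gives
\begin{equation*}
    \Pr\!\left[|w_H(\vb x) - \tfrac{n-m}{2}| \ge cn^{\frac12+\kappa_2}\right] \le \frac{\Var(w_H)}{c^2 n^{1+2\kappa_2}},
\end{equation*}
so the task reduces to showing $\Var(w_H) = \tfrac14 \Var(\vb W) \le Cn^{1+\kappa_1}$ for a suitable constant (aiming for $C$ slightly below $6$, which yields the factor $3/(2c^2)$).

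Next I would expand $\Var(\vb W) = \sum_{i,j}\mathrm{Cov}(\vb Z_i,\vb Z_j)$, where the covariances are computed in $\ket{\psi_T}$, and split the sum according to the graph distance between $i$ and $j$. For a parameter $L$ to be chosen later, let $\widetilde{\vb Z}_i(T) := \vb V_i^\dagger(T)\vb Z_i\vb V_i(T)$ be the Lieb-Robinson approximation supported on $\{i\}\cup\partial_L(\{i\})$, and let $\epsilon_{LR}$ denote the bound from \cref{eq: Local Operator Difference} applied to $\vb O_A=\vb Z_i$ (so $|A|=1$, $\|\vb O_A\|=1$). A short telescoping argument using $\|\vb Z_i(T)\|=1$ gives
\begin{equation*}
    |\mathrm{Cov}(\vb Z_i,\vb Z_j)-\mathrm{Cov}(\widetilde{\vb Z}_i(T),\widetilde{\vb Z}_j(T))| \le 4\epsilon_{LR}.
\end{equation*}
When $\dist_G(i,j) > 2L$, the approximations $\widetilde{\vb Z}_i(T)$ and $\widetilde{\vb Z}_j(T)$ act on disjoint regions, so the product-state assumption on $\ket{\psi_0}$ makes the approximate covariance vanish exactly, leaving $|\mathrm{Cov}(\vb Z_i,\vb Z_j)|\le 4\epsilon_{LR}$. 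For the remaining "close" pairs with $\dist_G(i,j) \le 2L$ we use the trivial bound $|\mathrm{Cov}(\vb Z_i,\vb Z_j)| \le 2$; since each vertex has at most $\Delta^{2L+1}/(\Delta-1)\le 2\Delta^{2L}$ neighbours within distance $2L$, the close-pair contribution is at most $4n\Delta^{2L}$. Hence
\begin{equation*}
    \Var(\vb W) \le 4n\Delta^{2L} + 4\epsilon_{LR}\, n^2.
\end{equation*}

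Finally I would optimize $L$. Setting $L = \frac{\kappa_1}{2\log\Delta}\log n$ makes the close-pair term equal to $4n^{1+\kappa_1}$. With the hypothesized $T \le \frac{\kappa_1}{8g\Delta^{(2-\kappa_1)/\kappa_1}\log\Delta}\log n$, one computes $\log L - \log T - \log(4g(\Delta-1)) \ge \frac{2(1-\kappa_1)}{\kappa_1}\log\Delta$, and plugging into \cref{eq: Local Operator Difference} gives $\epsilon_{LR} \le n^{-(1-\kappa_1)}$ up to the sub-polynomial $1/\sqrt{L}$ factor, so the distant-pair contribution is also $O(n^{1+\kappa_1})$. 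This yields $\Var(\vb W) = O(n^{1+\kappa_1})$, and substituting into Chebyshev produces the claimed bound. The main obstacle I anticipate is purely bookkeeping: tracking the constants through the logarithmic arithmetic in the Lieb-Robinson exponent so that the close-pair and distant-pair contributions combine to give the explicit prefactor $3/2$, which is why the hypothesis on $T$ carries the precise factor $\Delta^{(2-\kappa_1)/\kappa_1}$.
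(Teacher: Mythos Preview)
Your proposal is correct and matches the paper's proof essentially step for step: the paper also rewrites the deviation in terms of $\sum_i \vb Z_i$, applies Markov/Chebyshev, expands the variance as a double sum over $(i,j)$, splits at graph distance $2L$ with $L=\frac{\kappa_1}{2\log\Delta}\log n$, uses the Lieb-Robinson approximation plus the product-state assumption to kill the far-pair covariances up to $O(n^{-(1-\kappa_1)})$, and counts $\le 2\Delta^{2L}$ close neighbours per vertex. The one refinement you correctly anticipated as ``bookkeeping'' is that the paper bounds each close-pair covariance by $1$ via Cauchy--Schwarz, $|\mathrm{Cov}(\vb Z_i,\vb Z_j)|\le\sqrt{(1-m_i^2)(1-m_j^2)}\le 1$, rather than your trivial bound of $2$; this yields $2n^{1+\kappa_1}+4n^{1+\kappa_1}=6n^{1+\kappa_1}$ for the variance of $\sum_i\vb Z_i$ and hence the exact prefactor $3/(2c^2)$.
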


This theorem says that, when $T$ is bounded, with high probability the hamming weight of the measurement output is close to $\frac{1}{2}(n-m)$ which is the average of $w_H(\vb x)$. That is, the Hamming weight of the measurement outcome is \emph{concentrated}.

The proof idea of this theorem is similar to that of the standard Chebyshev inequality; we need to bound the variance of $w_H(\vb x)$. In the standard Chebyshev inequality the coordinates of $\vb x$ are independent while here they are correlated. Nevertheless, since $T$ is bounded, by the Lieb-Robinson bound (\cref{thm: lieb-robinson}) pairs of coordinates whose associated Hamming weights are far from each other are almost independent. This allows us to bound the variance of $w_H(\vb x)$. A detailed proof of this theorem is given in \cref{app: Chebyshev-bound}.

As a result of this theorem, as long as the initial state is a product state, no short-time annealing computation can produce the $n$-qubit GHZ state $\frac{1}{\sqrt 2}(\ket{0\cdots 0}+\ket{1\cdots 1})$ in sub-logarithmic times. This is because measuring this state in the computational basis gives two far apart peaks in the hamming weight $w_H$, \cref{fig: distribution}. We note that this statement holds for any annealing computation even with gapless Hamiltonians. Intuitively speaking, this is a result of the local nature of the Hamiltonian in \cref{eq:hamiltonian-t}.

\begin{figure}[ht]
\centering
\includegraphics[width=8.16cm]{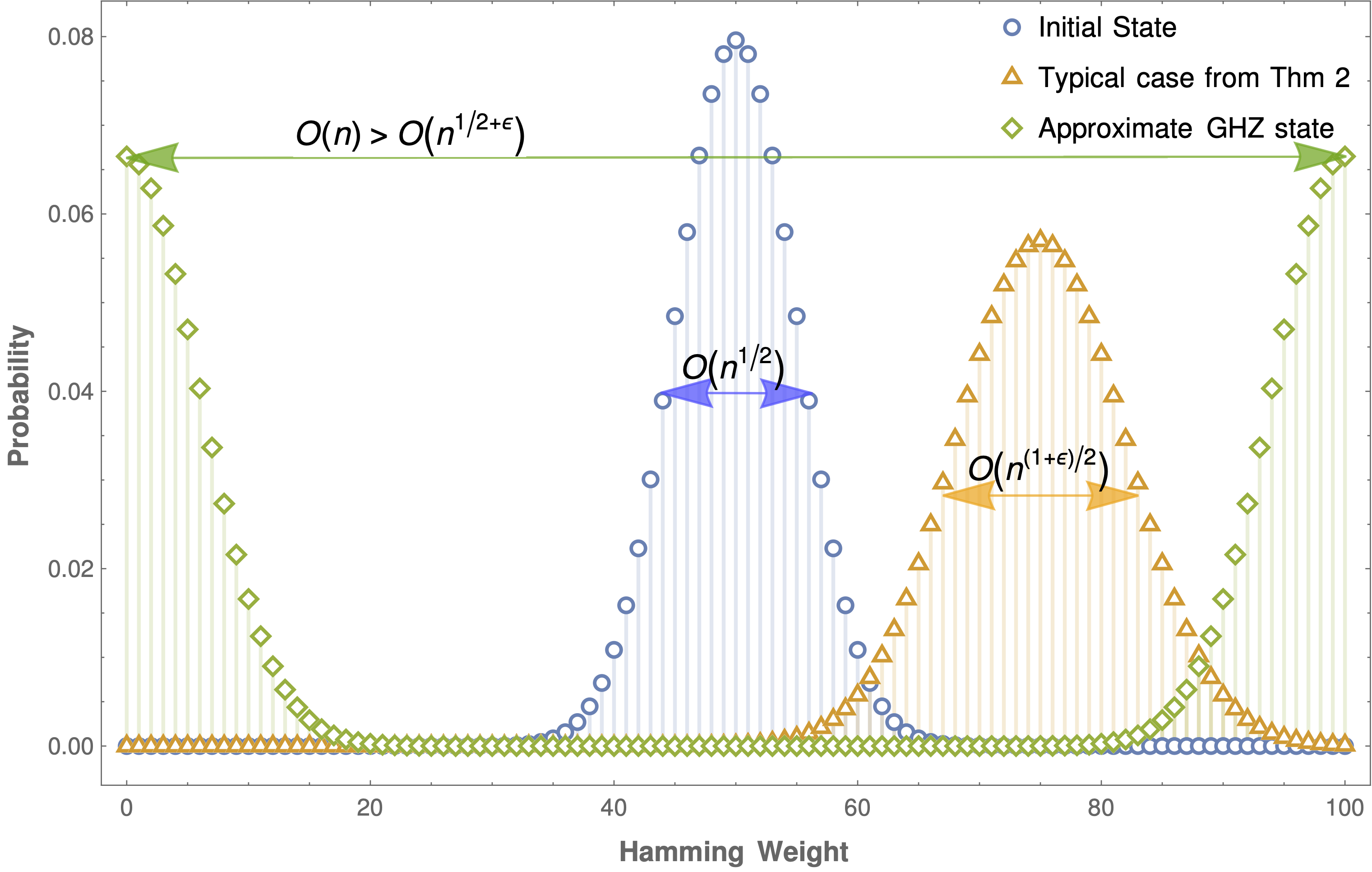}
\caption{(Color online) Illustration of Theorem~\ref{thm:Chebyshev} depicting the probability distribution of the Hamming weight of three different $n$-qubit states for $n=100$. The centered distribution in the middle (blue) is the typical distribution for a product initial state (e.g., $\ket{+}^{\otimes n}$). The distribution that is peaked just to the right (orange) represents the distribution after a short time evolution of the initial state described by \cref{thm:Chebyshev}. The distribution that has two peaks on either side of the spectrum (green) is an approximation for the GHZ state. This shows that the GHZ state cannot be generated even approximately by a short-time quantum annealing.}
\label{fig: distribution}
\end{figure}

\subsection{An isoperimetric inequality\label{subsec: Isoperimetric}}

We now prove that the output distribution of a short-time quantum annealing algorithm satisfies an \emph{isoperimetric inequality}, see \cref{fig: isoperimetric,fig:corollary}. Such a result for low-depth circuits was first proved in~\cite{Eldar2017}. 

\begin{thm}
    \label{thm: expansion-like}
    Let $\vb H(t)$ be a Hamiltonian on a graph with maximum degree $\Delta>1$ as in \cref{eq:hamiltonian-t} with $\|u_e(t)\vb h_e\|\leq g$ for any edge $e$. Let $\ket{\psi_0} = \ket{s_1}\otimes\ket{s_2}\otimes \cdots \otimes \ket{s_n}$ be an arbitrary product state and let $\ket{\psi_T}=\vb U(T)\ket{\psi_0}$ be the state of the system at time $T>0$.  Let
    $p(\vb x)$ be the distribution on $\{0,1\}^{n_0}$ resulting from measuring the first $ n_0 $ qubits of $\ket{\psi_T}$ in the computational basis, \emph{i.e.},
    \begin{align}
        p(\vb x) &= \sum_{\vb y\in\{0,1\}^{n- n_0 }} \abs{\braket{\vb x,\vb y}{\psi_T}}^2.
\end{align}
Then for any $F \subset \{0, 1\}^ {n_0} $ with $p(F) \le \frac{1}{2}$ and any $0\leq \theta\leq 1/2$ the following results hold:
\begin{itemize}

\item[{\rm (i)}]
Assume that 
\begin{equation}
T\leq \frac{\kappa_1}{4g\log (\Delta)\Delta^{\frac{1+2\kappa_1+\kappa_2}{\kappa_1}}}\log n_0\, , \label{eqn: optimal time limit}
\end{equation}
for some constants $\kappa_1, \kappa_2>0$. 
Then we have
\begin{align}    
    p\big(\partial_\ell (F)\big) \geq  \frac{1}{8}(2n_0^{1+\kappa_1})^{-2\theta} p(F)  - 2n_0^{-\kappa_2}.
\end{align}
where 
\begin{equation}
\ell = \frac{\kappa_1}{\sqrt 2\log (\Delta)} \log(n_0) n_0^{\big(\frac 12 -\theta\big)(1+\kappa_1)} \, . \label{eqn: ell}
\end{equation}

\item[{\rm (ii)}] Assume that we measure all the qubits, \emph{i.e.}\@ $n_0=n$. Moreover, assume that 
\begin{equation}\label{eq:bound-T-n_0=n}
T\leq \frac{\kappa_1}{4g(\Delta-1)}\log n\, , \end{equation}
and 
\begin{align}\label{eq:ell:n_0=n}
    \ell = \frac{\kappa_1(1+\kappa_2)}{2}\log(n)\sqrt n
\end{align}
for some constants $\kappa_1, \kappa_2>0$. Then we have
\begin{equation}p(\partial_\ell(F))\geq \frac{1}{8} p(F) - \frac 54 n^{-\big(\kappa_1(1+\kappa_2)\log(1+\kappa_2)-1\big)}.\end{equation}

\end{itemize}
\end{thm}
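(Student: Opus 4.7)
The plan is to follow the Lieb--Robinson analog of the Eldar--Harrow isoperimetric inequality for shallow circuits. Let $G := \{\vb y \in \{0,1\}^{n_0} : d_H(\vb y, F) > \ell\}$, so that $\{0,1\}^{n_0} = F \sqcup \partial_\ell(F) \sqcup G$ and the claimed lower bound on $p(\partial_\ell F)$ reduces to an upper bound on $p(G)$ in terms of $p(F)$.

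The first and most important step is to establish an \emph{approximate block factorization} of the measurement distribution, using \cref{thm: lieb-robinson}. Partition the graph into disjoint blocks $B_1,\dots,B_k$ whose $L$-neighborhoods $B_i \cup \partial_L(B_i)$ are pairwise disjoint. For each block and each local pattern $\vb s_i \in \{0,1\}^{|B_i|}$, the computational-basis projector $\Pi_{B_i,\vb s_i}$ is a local observable, and applying \cref{eq: Local Operator Difference} gives
\begin{equation*}
\langle \psi_T|\Pi_{B_i,\vb s_i}|\psi_T\rangle \approx \langle\psi_0|\vb V_{B_i}^\dagger \Pi_{B_i,\vb s_i}\vb V_{B_i}|\psi_0\rangle .
\end{equation*}
Because $\ket{\psi_0}$ is a product state and the $\vb V_{B_i}$ act on disjoint regions, the joint expectation of $\prod_i \Pi_{B_i,\vb s_i}$ approximately factorizes as the product of its single-block expectations, with additive error controlled by $k$ times the right-hand side of \cref{eq: Local Operator Difference}. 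In other words, the block-marginal of $p$ is $\epsilon$-close in total variation to a genuine product distribution $q = q_1 \otimes \cdots \otimes q_k$.

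From this near-product structure I would derive the isoperimetric inequality by invoking a classical block-level isoperimetric inequality (Harper/McDiarmid-type). For part (ii), taking roughly $k \sim \sqrt n$ blocks of size $\sqrt n$, the choice \cref{eq:ell:n_0=n} places $\ell$ at the standard-deviation scale for the product measure $q$; for any $F$ with $q(F) \le 1/2$ the product isoperimetric inequality then gives $q(\partial_\ell F) \gtrsim q(F)/8$, and pushing the $\epsilon$ error from approximate factorization through yields the claimed additive correction $\tfrac{5}{4} n^{-(\kappa_1(1+\kappa_2)\log(1+\kappa_2)-1)}$. For part (i) with $n_0<n$, only the first $n_0$ qubits are partitioned, while qubits $n_0{+}1,\dots,n$ contribute to the $L$-neighborhoods; the parameter $\theta \in [0,1/2]$ encodes a trade-off between block size and number of blocks, with block size $\sim n_0^{(1/2-\theta)(1+\kappa_1)}$ (up to logs), yielding the radius $\ell$ in \cref{eqn: ell} and the multiplicative loss $(2n_0^{1+\kappa_1})^{-2\theta}$ in front of $p(F)$.

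The main obstacle will be the careful error accounting. The Lieb--Robinson factorization is only approximate, and the factorization error scales with the number of blocks $k$; the choice of $L$ (hence of $T$ via \cref{eqn: optimal time limit} and \cref{eq:bound-T-n_0=n}) must make the per-block Lieb--Robinson error small enough that $k\cdot \epsilon$ fits inside the additive terms $2 n_0^{-\kappa_2}$ and $\tfrac{5}{4}n^{-(\kappa_1(1+\kappa_2)\log(1+\kappa_2)-1)}$. Balancing this against the requirement that $L$ is small enough for block $L$-neighborhoods to be disjoint (which for maximum degree $\Delta$ means blocks separated by graph distance $>2L$) is what forces the precise exponents in $\kappa_1,\kappa_2$. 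A secondary technical point is adapting the classical product isoperimetric inequality so that the loss from block-level to bit-level Hamming boundary (each block flip can change up to $|B_i|$ bits) is absorbed into the definition of $\ell$ without degrading the constant $1/8$.
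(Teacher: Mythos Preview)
Your proposal has a structural gap at the very first step: you cannot partition the vertex set of a connected bounded-degree graph into blocks $B_1,\dots,B_k$ whose $L$-neighborhoods are pairwise disjoint once $L\ge 1$. Any two adjacent blocks have overlapping $L$-neighborhoods, so either the blocks fail to cover all of $\{1,\dots,n_0\}$ or the $\vb V_{B_i}$ no longer act on disjoint regions, and the factorization $\langle\psi_0|\prod_i \tilde\Pi_i|\psi_0\rangle=\prod_i\langle\psi_0|\tilde\Pi_i|\psi_0\rangle$ breaks down. Even setting this aside, your route from ``per-configuration Lieb--Robinson error'' to ``total-variation closeness of $p$ to a product measure $q$'' would require summing the pointwise error over all $2^{n_0}$ configurations, which swamps any polynomial error budget; one would need a genuinely different coupling argument that you have not supplied. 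Finally, the classical product isoperimetric inequality you invoke lives on the block alphabet $\prod_i\{0,1\}^{|B_i|}$ with block-Hamming metric, and passing back to bit-Hamming distance at the scale $\ell\sim\sqrt n\log n$ with the specific constant $1/8$ is not a ``secondary technical point'' but an additional argument that is missing.

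For comparison, the paper (and Eldar--Harrow, which you cite but do not actually follow) does \emph{not} attempt any block factorization. Instead it builds the operator $\vb\Gamma_0=\frac{1}{n_1}\sum_i\ketbra{\bar s_i}_i$, applies a degree-$m$ Chebyshev polynomial $C_m$ to obtain $\vb K=\vb U\,C_m(\vb\Gamma_0)\,\vb U^\dagger$, and uses the Lieb--Robinson bound term-by-term on the \emph{one-local} summands of $\vb\Gamma_0$ to produce an approximation $\tilde{\vb\Gamma}$; the crux is then controlling $\|C_m(\tilde{\vb\Gamma})-C_m(\vb U\vb\Gamma_0\vb U^\dagger)\|$ despite the large degree $m$, which is done via the $\gamma_2$-norm bounds of \cref{lem: gamma-2-norm,lem: GammaNormCm}. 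The isoperimetric inequality then comes from a reflection argument with $\ket{\psi'}=\vb R_0\ket\psi$ and the spectral gap of $C_m$ on $[1/n_1,1]$, not from any classical product-measure isoperimetry. The parameter $\theta$ enters through the choice $m=\tfrac12 n_1^{1/2-\theta}$, not through a block-size/number trade-off.
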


This theorem says that, measuring in terms of the outcome probabilities of a short-time quantum annealing computation, the size of a boundary of a set is large comparing to the size of a set itself, see Fig.~\ref{fig: isoperimetric}. The proof of this theorem is given in \cref{app: proof-expansion-like}.

\begin{figure}[t]
\centering
\includegraphics[width=3.82cm]{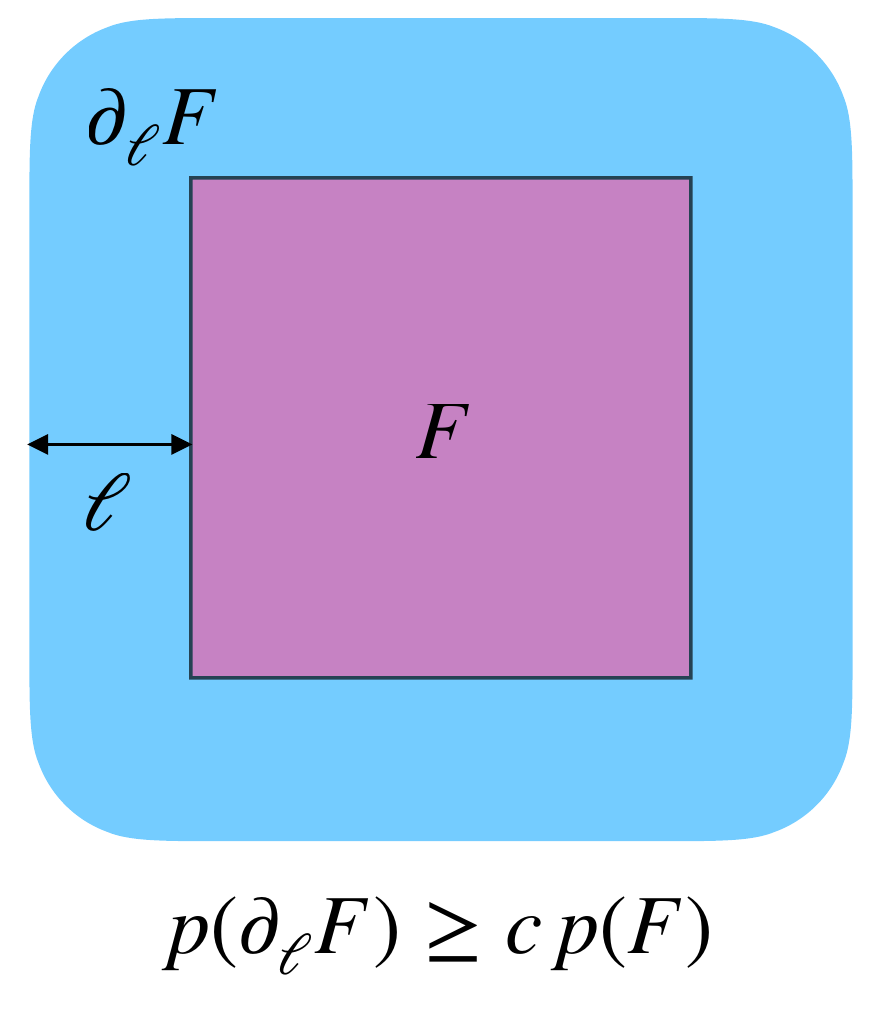}
\caption{(Color online) A schematic presentation of the isoperimetric inequality, stating that the boundary of a region is large comparing to the region itself.}
\label{fig: isoperimetric}
\end{figure}

\begin{figure}[ht]
\centering
\includegraphics[width=8.16cm]{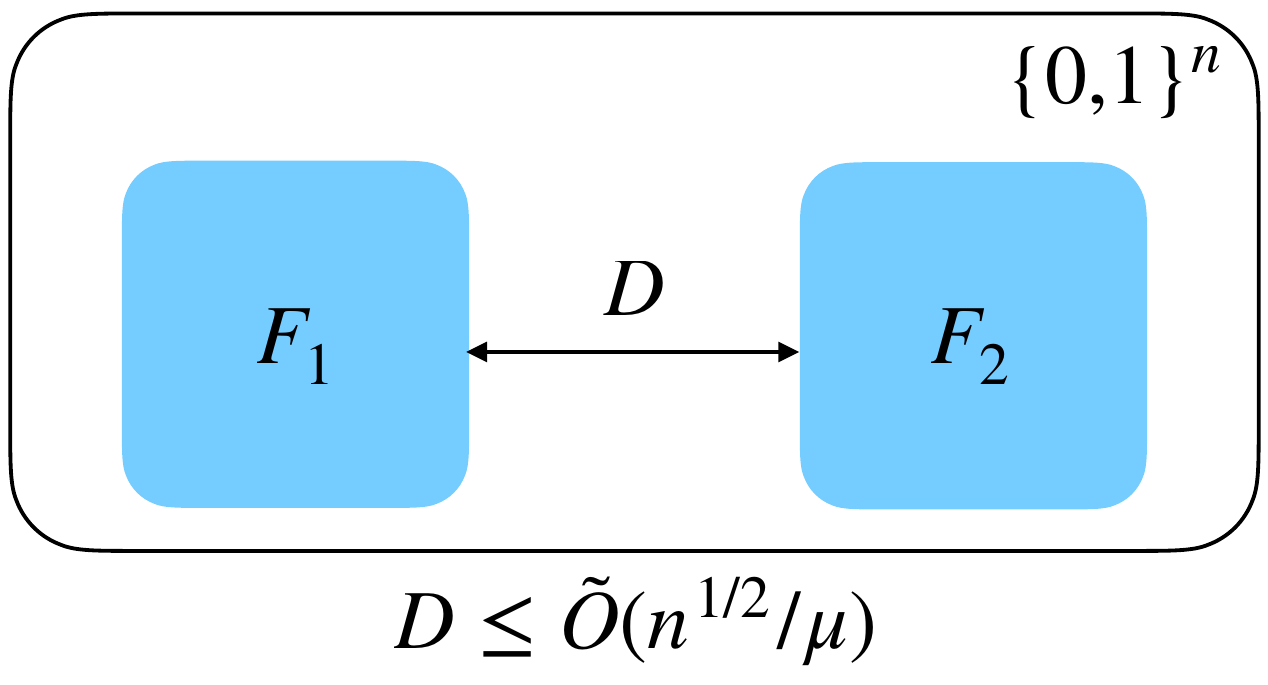}
\caption{(Color online) \Cref{thm: layers} says that measuring in terms of the output distribution of a short-time QA algorithm, if two sets have size at least $\mu$, then their distance (ignoring some logarithmic factors) is at most $n^{1/2}/\mu$.}
\label{fig:corollary}
\end{figure}

\begin{thm}\label{thm: layers}
    Let $\vb H(t)$ be a Hamiltonian on a graph with maximum degree $\Delta>1$ as in \cref{eq:hamiltonian-t} with $\|u_e(t)\vb h_e\|\leq g$ for any edge $e$. Let $\ket{\psi_0} = \ket{s_1}\otimes\ket{s_2}\otimes \cdots \otimes \ket{s_n}$ be an arbitrary product state and let $\ket{\psi_T}=\vb U(T)\ket{\psi_0}$ be the state of the system at time $T>0$.  Let
    $p(\vb x)$ be the distribution on $\{0,1\}^{n}$ resulting from measuring qubits of $\ket{\psi_T}$ in the computational basis.
Moreover, assume that 
\begin{equation}\begin{aligned}\label{eq:bound-T-n_0=n-b}
T & \leq \frac{\kappa_1}{4g(\Delta-1)}\log n,\, \\ \ell & = \frac{\kappa_1(1+\kappa_2)}{2}\log(n)\sqrt n,
\end{aligned}\end{equation}
for some constants $\kappa_1, \kappa_2>0$. Then for two arbitrary disjoint sets $F_1, F_2 \subset \{0, 1\}^ {n} $ with Hamming distance $D = \dist_H(F_1, F_2) $ and $\mu := \min \{p(F_1), p(F_2) \}$, we have
\begin{align}
    D&< \frac{16\ell}{\mu} + \frac{10D}{\mu} n^{-\big(\kappa_1(1+\kappa_2)\log(1+\kappa_2)-1\big)}
    \\
    &\leq \frac{16\ell}{\mu} + \frac{10}{\mu} n^{-\big(\kappa_1(1+\kappa_2)\log(1+\kappa_2)-2\big)}.
\end{align}

\end{thm}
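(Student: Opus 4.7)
The plan is to apply the isoperimetric-type inequality of \Cref{thm: expansion-like}(ii) iteratively to a growing sequence of Hamming thickenings of $F_1$, and to argue that only finitely many such thickenings of width $\ell$ can be stacked before hitting $F_2$, which will translate into an upper bound on $D$. Without loss of generality I take $\mu = p(F_1) \le p(F_2)$, so $\mu \le 1/2$. Working in the Hamming cube, define the shells $L_0 = F_1$ and $L_k = \{\vb x : (k-1)\ell < \dist_H(\vb x, F_1) \le k\ell\}$ for $k \ge 1$, and the thickenings $A_k = \bigcup_{j<k} L_j$. Let $K$ be the largest integer with $K\ell < D$; then $D \le (K+1)\ell$ and every shell $L_0,\dots,L_K$ is disjoint from $F_2$, so $\sum_{k=0}^K p(L_k) = p(A_{K+1}) \le 1 - p(F_2) \le 1-\mu$.

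Next, I would use that in the Hamming cube $\dist_H(\vb x, F_1^{(r)}) = \max(0, \dist_H(\vb x, F_1) - r)$ (a shortest bit-flip path can be traversed one step at a time), which implies $\partial_\ell^{\mathrm{out}}(A_k) = L_k$ and $\partial_\ell^{\mathrm{in}}(A_k) \subseteq L_{k-1}$, hence $p(\partial_\ell(A_k)) \le p(L_{k-1}) + p(L_k)$. Since $F_1 \subseteq A_k$ and $F_2 \subseteq A_k^c$, both $p(A_k)$ and $p(A_k^c)$ are at least $\mu$. Applying \Cref{thm: expansion-like}(ii) to whichever of $A_k, A_k^c$ has probability at most $1/2$ (and using that the $\ell$-boundary is invariant under complement) yields
\[
p(L_{k-1}) + p(L_k) \;\ge\; \tfrac{\mu}{8} - \tfrac{5}{4}n^{-\alpha}, \qquad \alpha := \kappa_1(1+\kappa_2)\log(1+\kappa_2)-1.
\]

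Summing over $k=1,\dots,K$ gives
\[
K\!\left(\tfrac{\mu}{8}-\tfrac{5}{4}n^{-\alpha}\right) \;\le\; 2\sum_{k=0}^K p(L_k) - p(L_0) - p(L_K) \;\le\; 2(1-\mu) - \mu \;=\; 2-3\mu,
\]
using $p(L_0) = p(F_1) \ge \mu$ and $p(L_K)\ge 0$. This rearranges to $K(\mu - 10n^{-\alpha}) \le 16 - 24\mu$. Multiplying through by $\ell$ and using $K\ell \ge D - \ell$ produces $(D-\ell)(\mu - 10n^{-\alpha}) \le (16 - 24\mu)\ell$; expanding the left side and discarding the non-positive $-\ell\mu - 24\mu\ell$ contributions yields $D(\mu - 10n^{-\alpha}) < 16\ell$, which is exactly $D\mu < 16\ell + 10Dn^{-\alpha}$ as claimed. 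The second inequality of the theorem then follows from $D\le n$.

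The most delicate step is the final rearrangement: a naive accounting picks up an unwanted $+\ell$ additive error that is of the same order as the main term $16\ell/\mu$ when $\mu \approx 1/2$. Keeping the contribution $p(L_0)\ge \mu$ (rather than discarding it as a non-negative term) is what produces the negative $-24\mu\ell$ that absorbs this error and yields the sharp constant $16$. The geometric identifications $\partial_\ell^{\mathrm{out}}(A_k) = L_k$ and $\partial_\ell^{\mathrm{in}}(A_k)\subseteq L_{k-1}$ also rely crucially on Hamming distance being realised on the Boolean hypercube where one may interpolate one bit-flip at a time; the same scheme would not work verbatim if $\dist_H$ were replaced by an arbitrary graph metric.
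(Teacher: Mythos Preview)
Your proof is correct and rests on the same two ingredients as the paper's---the shell decomposition $L_k=\{\vb x:(k-1)\ell<\dist_H(\vb x,F_1)\le k\ell\}$ together with the isoperimetric inequality of \cref{thm: expansion-like}(ii)---but the way you combine them differs. The paper argues by pigeonhole: summing $p(K_d)+p(K_{d+1})$ over the (roughly $D/2\ell$) odd indices $d$, the total is at most $1-2\mu$, so some single pair $K_{d_0},K_{d_0+1}$ has mass at most $(1-2\mu)/(\tfrac{D}{2\ell}-2)$; one application of \cref{thm: expansion-like}(ii) to $\widetilde F_1=\bigcup_{j\le d_0}K_j$ then gives the claimed bound after rearranging. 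You instead apply \cref{thm: expansion-like}(ii) to \emph{every} thickening $A_k$ and sum the resulting inequalities. Your route requires slightly more bookkeeping (tracking the endpoint contributions $p(L_0)$ and $p(L_K)$ to recover the constant $16$), whereas the paper's single-shot pigeonhole avoids that; on the other hand, your argument makes explicit the complement trick $\partial_\ell(A_k)=\partial_\ell(A_k^c)$ needed to handle the constraint $p(F)\le 1/2$, which the paper's proof leaves implicit. A small point: in your final step you only need the containments $\partial_\ell^{\mathrm{out}}(A_k)\subseteq L_k$ and $\partial_\ell^{\mathrm{in}}(A_k)\subseteq L_{k-1}$, both of which follow from the triangle inequality alone, so the interpolation property you invoke is not actually essential.
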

The contra-positive version of this theorem says that if two far apart subsets of $\{0,1\}^n$ have high probability under the distribution produced by the annealing algorithm, then the run-time of the annealing algorithm is necessarily long.

Here is a high level perspectice of how the proofs are laid out in the appendix. In order to prove \cref{thm: expansion-like}, we need to bound the derivative of a certain Chebyshev polynomial of some quantum operators. Because of the approximate nature of the Lieb-Robinson bound, proving this requires an accurate mathematical analysis of the polynomials using the $\gamma_2$-norm. More naive approaches would have disproportionately large errors. The proof is laid out in \cref{app: gamma2-for-polys}. Then, by utilizing these results we prove \cref{thm: expansion-like} in \cref{app: proof-expansion-like}. \Cref{thm: layers} is basically a corollary of \cref{thm: expansion-like} and it is proven in \cref{app:layers}.

\section{\Large{M}\normalsize{AX}\Large{C}\normalsize{UT}\Large{ on Ramanujan Graphs}} \label{sec: Koenig}

In this section and the following one, we present limits of short-time quantum annealing algorithms for the \textsc{MaxCut} problem. Here, inspired by~\cite{Bravyi2019} 
our strategy is to
use the $\mathbb Z_2$-symmetry of the \textsc{MaxCut} Hamiltonian, and show that short-time quantum annealing algorithm cannot outperform best classical algorithms for \textsc{MaxCut} on Ramanujan graphs~\cite{Lubotzky1988}.

We first describe the quantum annealing algorithm for \textsc{MaxCut}.

Given a graph \(G = (V, E)\), the \textsc{MaxCut} problem asks for a \emph{cut} with maximum size in $G$; meaning what is the maximum number of edges between vertices in $V_{+}$ and  $V_{-}$ where $V_{+}\cup V_{-}=V$ form a partition of vertices? Labeling vertices in $V_+$ with $+1$ and vertices in $V_-$ with $-1$, this problem is equivalent to minimizing the energy of the following Hamiltonian:
     \begin{equation} \vb C := -\frac 1 2 \sum_{(i, j) \in E} {\left(\vb I-\vb Z_i \vb Z_j\right)}, \end{equation}
Where $\vb Z_i$'s are the Pauli-$z$ operators. Indeed, letting $\Cut^*$ be the maximum cut of $G$ we have
\begin{equation}\Cut^* = -\min_{\ket\phi} \bra \phi \vb C\ket \phi.\end{equation}

We may use quantum annealing to find the ground state and minimum energy of $\vb C$. To this end, a common choice for the \emph{driving Hamiltonian} is the summation of Pauli-$x$ operators:
    \begin{equation}
        \vb B := -\sum_{k \in V} \vb X_k.
    \end{equation} 
Using a control parameter $u : [0, T] \to [0, 1]$ with $u(0)=0$ and $u(T)=1$, the resulting time-dependent Hamiltonian is given by
    \begin{equation}
    \vb H(t) := \left(1-u(t)\right) \vb B + u(t) \vb C \label{Eq: MAXCUT Hamiltonian} \end{equation}
At time $t=0$, we start with the ground state of $\vb H(0)= \vb B$ that is equal to  
\begin{align}\label{eq:maxcut-psi-0}
\ket{\psi_0} = \ket{+}^{\otimes n},
\end{align}
where $n=|V|$ is the number of vertices. In the following we show that if $T$ is bounded, then 
\begin{equation}\bra{\psi_T} \vb C\ket{\psi_T},\end{equation}
is bounded away from the ground energy, i.e., $-\Cut^*$, for a certain family of graphs to be defined.

\medskip
For a graph $G=(V,E)$ on $n$ vertices, its \emph{Cheeger constant} is defined by~\cite{Mohar1989}:
\begin{equation}
    h(G) = \min_{S\subset V, 0<\abs{S}\le \frac n 2} \frac{\abs{\partial^E(S)}}{\abs{S}} \, ,
\end{equation}
where $\partial^E(S)$ is the set of \emph{edges} with one end in $S$ and one end outside of $S$. \emph{Ramanujan graphs} are certain families of graphs that have a large Cheeger constant; for a $\Delta$-regular Ramanujan graph $G$ we have
\begin{equation}h(G)\geq \frac{1}{2}\big(\Delta-2\sqrt{\Delta-1}\big).\end{equation}
It is known that~\cite{Marcus2015,Marcus2018,Hall2018} for any $\Delta\geq 3$ there exists an infinite family of $\Delta$-regular \emph{bipartite} Ramanujan graphs. We note that for such a graph, as a  bipartite graph, we have 
\begin{equation}\Cut^*= |E| = \frac{1}{2} \Delta n.\end{equation}

\begin{thm}\label{thm:MaxCut-Ramanujan}
Let $G$ be a $\Delta$-regular bipartite Ramanujan graph on $n$ vertices. Consider the quantum annealing algorithm given by \cref{Eq: MAXCUT Hamiltonian,eq:maxcut-psi-0}. Assume that
$T$ satisfies
\begin{equation}\label{eq:T-maxcut-bound}
T\leq \frac{\kappa_1}{4\Delta}\log n\, , \end{equation}
for some $\kappa_1>0$. 
Then, for any $0< \alpha< 1/2$, $0<\epsilon<1$ and sufficiently large $n$ we have
\begin{equation}\label{eq:cut-bound}
   \frac{-\bra{\psi_T}\vb C\ket{\psi_T}}{\Cut^*}< 1-\alpha(1-\epsilon) +2\alpha(1-\epsilon) \frac{\sqrt{\Delta - 1}}{\Delta}.
\end{equation} 
\end{thm}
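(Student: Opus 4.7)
The plan is a proof by contradiction that combines the $\mathbb{Z}_2$-symmetry of the annealing dynamics with the isoperimetric bound of \cref{thm: layers}, in the spirit of Bravyi et al.~\cite{Bravyi2019}. Let $(A,B)$ be the bipartition of $G$, so the two optimal cuts correspond to the strings $\vb z_0=\mathbf{1}_A$ and $\bar{\vb z}_0=\mathbf{1}_B$, and for $\vb x\in\{0,1\}^n$ set $\delta(\vb x):=\min\{d_H(\vb x,\vb z_0),\,d_H(\vb x,\bar{\vb z}_0)\}$. The first step is a combinatorial lemma: if $T$ denotes the disagreement set between $\vb x$ and $\vb z_0$, the miscut edges under the partition induced by $\vb x$ are exactly $\partial^E(T)$, and $\min(|T|,n-|T|)=\delta(\vb x)$. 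Combined with the Ramanujan bound $h(G)\ge(\Delta-2\sqrt{\Delta-1})/2$ and $\Cut^*=\Delta n/2$, this yields
\begin{equation}
  |E|-\mathrm{cut}(\vb x)\ \ge\ h(G)\,\delta(\vb x), \label{eq: cheeger-wrong-plan}
\end{equation}
and a short algebraic rearrangement shows that \eqref{eq:cut-bound} is equivalent to $\mathbb{E}_p[|E|-\mathrm{cut}(\vb x)]>\alpha(1-\epsilon)\,n\,h(G)$.

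Assuming the opposite, \eqref{eq: cheeger-wrong-plan} gives $\mathbb{E}_p[\delta]\le\alpha(1-\epsilon)\,n$, and Markov's inequality yields $p(\{\vb x:\delta(\vb x)<\alpha n\})\ge\epsilon$. Split this event by defining
\begin{equation}
  F_0:=\{\vb x:d_H(\vb x,\vb z_0)<\alpha n\},\qquad F_1:=\{\vb x:d_H(\vb x,\bar{\vb z}_0)<\alpha n\}.
\end{equation}
Since $\alpha<1/2$ the two sets are disjoint and $\dist_H(F_0,F_1)\ge(1-2\alpha)n$. The global bit flip $\prod_i\vb X_i$ commutes with both $\vb B$ and $\vb C$, hence with $\vb U(T)$, and it fixes $\ket{+}^{\otimes n}$; therefore $\ket{\psi_T}$ is $\mathbb{Z}_2$-invariant and $p(\vb x)=p(\bar{\vb x})$. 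This forces $p(F_0)=p(F_1)$, so $\mu:=\min(p(F_0),p(F_1))\ge\epsilon/2$.

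The last step is to feed $\mu=\epsilon/2$ and $D\ge(1-2\alpha)n$ into \cref{thm: layers}. The Hamiltonian \eqref{Eq: MAXCUT Hamiltonian} fits the template \eqref{eq:hamiltonian-t} with $g=1$, on the graph $G$ augmented by a loop at each vertex to carry the $\vb X_k$ terms, whose maximum degree is $\Delta+1$; hence the hypothesis $T\le \kappa_1/(4g(\Delta+1-1))\log n$ of \cref{thm: layers} coincides with \eqref{eq:T-maxcut-bound}. The conclusion of \cref{thm: layers} then reads $(1-2\alpha)n\le 32\kappa_1(1+\kappa_2)\epsilon^{-1}\sqrt{n}\log n + O\!\left(n^{2-\kappa_1(1+\kappa_2)\log(1+\kappa_2)}\right)$, which fails for every sufficiently large $n$ and any admissible choice of $\kappa_1,\kappa_2$. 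This contradiction establishes \eqref{eq:cut-bound}.

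The main obstacle I anticipate is the geometric lemma \eqref{eq: cheeger-wrong-plan}: one must correctly translate miscut edges in the bipartite graph into the edge-boundary of a vertex subset, handle the two cases $|T|\le n/2$ and $|T|>n/2$ symmetrically, and apply the Ramanujan Cheeger inequality on the smaller side. Everything else---$\mathbb{Z}_2$-invariance of $\ket{\psi_T}$, Markov's inequality, and verifying that the MaxCut instance matches the hypotheses of \cref{thm: layers}---is routine bookkeeping.
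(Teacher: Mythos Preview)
Your proposal is correct and follows essentially the same approach as the paper: the $\mathbb{Z}_2$-symmetry of $\ket{\psi_T}$, the Cheeger bound $|E|-\Cut(\vb x)\ge h(G)\,\delta(\vb x)$, Markov's inequality, and \cref{thm: layers} are exactly the four ingredients the paper uses, and your handling of the interaction graph (adding loops for the $\vb X_k$ terms so that the effective degree is $\Delta+1$) matches the paper's remark after the theorem. The only organizational difference is that the paper argues directly---it first applies \cref{thm: layers} to bound $p(F_1)=p(F_2)$ from above, then uses $J\subseteq F_1\cup F_2$ and the trivial bound $\mathbb{E}[|E|-\Cut(\vb x)]\ge h(G)d\cdot p(J^c)$---whereas you run the contrapositive as a proof by contradiction; the two are logically equivalent. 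Two small points of phrasing to tighten: the rearranged inequality with $h(G)$ is \emph{sufficient} for \eqref{eq:cut-bound} (not equivalent, since \eqref{eq:cut-bound} uses the Ramanujan lower bound $h(G)\ge(\Delta-2\sqrt{\Delta-1})/2$), and $\kappa_2$ is not arbitrary but must be chosen large enough that $\kappa_1(1+\kappa_2)\log(1+\kappa_2)>2$ so the error term in \cref{thm: layers} is $o(n)$.
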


We note that for appropriate choices of $\alpha, \epsilon$ and for sufficiently large (but constant) $\Delta\ge6$, the right hand side of \cref{eq:cut-bound} can be made smaller than $0.87856$. This means that the quantum annealing algorithm with $T\leq O(\log n)$  does not improve on the classical algorithm of Goemans and Williamson for \textsc{MaxCut}~\cite{goemans1995}. 

As the proof of this theorem in \cref{app:MaxCut-Ramanujan} shows, the smallest $n$ for which the bound \cref{eq:cut-bound} holds can be explicitly estimated in terms of $\alpha, \epsilon, \kappa_1$ and $\Delta$.

We remark that, comparing to \cref{eq:bound-T-n_0=n},  in \cref{eq:T-maxcut-bound} the parameter $\Delta$ is replaced with $\Delta+1$. The point is that in the \textsc{MaxCut} Hamiltonian \cref{Eq: MAXCUT Hamiltonian}, the number of terms acting on a vertex is $\Delta+1$ due to the extra term coming from $\vb B$.

\section{\large{M}\normalsize{AX}\large{C}\normalsize{UT}\large{ on Random Bipartite Graphs}}\label{sec: QA of Farhi and Gamarnik and Gutmann}

Inspired by the recent work of Farhi, Gamarnik and Gutmann~\cite{Farhi2020a}, in this section we claim that short-time QA algorithm for \textsc{MaxCut} does not perform well on random bipartite graphs. The idea is that such a short-time QA algorithm \emph{does not see the whole graph}, so it cannot distinguish between a bipartite random graph and an arbitrary random graph. Thus, the output cut-size of these algorithms on random bipartite graphs is bounded by the \textsc{MaxCut} of arbitrary random graphs.

\begin{thm}\label{thm:random-bipartite}
Consider the QA algorithm using the Hamiltonian~\eqref{Eq: MAXCUT Hamiltonian} with the initial state $\ket{\psi_0} = \ket{+}^{\otimes n}$, and time 
\begin{equation}T\leq \frac{\kappa}{16\log(\Delta)\Delta^{1+4/\kappa}}\log n\, ,\end{equation}
for some $0<\kappa<1$. Then, for any $\nu>0$ and sufficiently large $n$, the expected cut size from applying the algorithm on a random $\Delta$-regular bipartite graphs is at most
\begin{align}\label{eq:random-bipartite-bound-0}\left(\frac{\Delta}{4}+O(\sqrt{\Delta})+2\nu \right)n\, .
\end{align}
\end{thm}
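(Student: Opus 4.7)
The plan is to follow the strategy of Farhi, Gamarnik and Gutmann~\cite{Farhi2020a} and use the Lieb-Robinson bound of \cref{thm: lieb-robinson} to argue that a short-time QA algorithm for \textsc{MaxCut} is too local to distinguish a random $\Delta$-regular bipartite graph from a random $\Delta$-regular graph (without the bipartiteness constraint). Since the max-cut of the latter is already known to satisfy $\mathbb{E}[\Cut^*]\le\big(\tfrac{\Delta}{4}+O(\sqrt{\Delta})\big)n$, the bipartite bound would follow.

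First I would write the expected cut size as
\begin{align}
\mathbb{E}_{G}\bra{\psi_T}\vb C\ket{\psi_T}
= \tfrac{1}{2}\mathbb{E}_G |E| - \tfrac{1}{2}\sum_{(i,j)}\mathbb{E}_G \bra{\psi_T}\vb Z_i\vb Z_j\ket{\psi_T},
\end{align}
where the expectation is over $G$ drawn from the random $\Delta$-regular bipartite ensemble. Then I would apply \cref{thm: lieb-robinson} to the observable $\vb Z_i\vb Z_j$ with $A=\{i,j\}$ and a boundary depth $L$ chosen as a slowly growing function of $\log n$. For the prescribed range of $T$, the induced neighborhood $A\cup\partial_L(A)$ has at most $n^{o(1)}$ vertices while the Lieb-Robinson error decays super-polynomially in $L$, so up to an accumulated additive error of $\nu n$ over all $\tfrac12\Delta n$ edges we may replace $\vb U^\dagger(T)\vb Z_i\vb Z_j\vb U(T)$ by the local evolution $\vb V_{ij}^\dagger\vb Z_i\vb Z_j\vb V_{ij}$, whose expectation depends only on the induced subgraph of $G$ on $A\cup\partial_L(A)$ together with the $u_e$'s on those edges.

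Next I would compare the bipartite ensemble $\mathcal{B}_{n,\Delta}$ with the unrestricted random $\Delta$-regular ensemble $\mathcal{G}_{n,\Delta}$. By a standard configuration-model computation, in both ensembles the $L$-neighborhood of a typical edge is a $\Delta$-regular tree with probability $1-o(1)$; and conditioned on being a tree the two local distributions coincide, because every tree is bipartite and hence the bipartiteness constraint imposes nothing extra on small tree-like neighborhoods. This yields
\begin{align}
\left|\mathbb{E}_{\mathcal B}\bra{\psi_T}\vb Z_i\vb Z_j\ket{\psi_T} - \mathbb{E}_{\mathcal G}\bra{\psi_T}\vb Z_i\vb Z_j\ket{\psi_T}\right| = o(1)
\end{align}
per edge, which sums to at most $\nu n$ with room to spare. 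Finally, the trivial inequality $\bra{\psi_T}\vb C\ket{\psi_T}\le \Cut^*(G)$ combined with the Friedman-type spectral bound $\mathbb{E}_{\mathcal G}[\Cut^*(G)]\le\big(\tfrac{\Delta}{4}+O(\sqrt{\Delta})\big)n$ on random regular graphs closes the chain and gives \cref{eq:random-bipartite-bound-0}.

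The main obstacle is the quantitative bookkeeping in the second step: one must pick $L$ so that simultaneously (a) the Lieb-Robinson tail, when multiplied by the operator norm $\|\vb Z_i\vb Z_j\|=1$ and summed over all $\tfrac12\Delta n$ edges, contributes at most $O(\nu n)$; (b) the neighborhood $A\cup\partial_L(A)$ has size $n^{o(1)}$ so that standard local weak-convergence results force it to be a tree with probability $1-o(1)$ under both ensembles; and (c) the two ensembles can be coupled locally with total variation $o(1)$. Balancing the Lieb-Robinson exponent in \cref{eq: Local Operator Difference} against the $\log n$-scale at which the local neighborhood ceases to be a tree is precisely what forces the $\log(\Delta)\Delta^{1+4/\kappa}$ denominator in the hypothesis on $T$, and tracking these constants carefully is the delicate part of the proof.
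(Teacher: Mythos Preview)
Your proposal is correct and follows essentially the same route as the paper's proof: apply the Lieb-Robinson bound (\cref{thm: lieb-robinson}) edgewise to localize each $\vb Z_i\vb Z_j$ to an $L$-neighborhood, use that such neighborhoods are trees in both the bipartite and non-bipartite $\Delta$-regular ensembles with probability $1-o(1)$ (the paper does this by bounding the expected number of short cycles via~\cite{McKay2004} and defining the exceptional edge set $R_L(G)$), and then invoke the known bound $\mathbb{E}_{\mathcal G}[\Cut^*]\le(\tfrac{\Delta}{4}+O(\sqrt{\Delta}))n$ on random regular graphs (the paper cites~\cite{Dembo2017,Kardos2012,Coppersmith2004} rather than Friedman, but either route gives the $O(\sqrt{\Delta})$ term). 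The only cosmetic difference is that the paper extracts an explicit scalar $\mathcal{E}_{\text{tree}}(L)$---the common value of the localized edge contribution when the neighborhood is a tree---and sandwiches it between upper and lower bounds, whereas you compare the two ensemble expectations directly; these are equivalent, and your identification of the three competing constraints (a)--(c) on the choice of $L$ matches the paper's bookkeeping.
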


Note that, by the bound in \cref{eq:random-bipartite-bound-0}, the performance of the QA algorithm on random $\Delta$-regular bipartite graphs is not much better than outputting a random cut.

The proof of this theorem is given in \cref{app:random-bipartite}.

Farhi et al also show how the low-depth QAOA algorithm cannot outperform classical algorithms for the maximum independent set problem on random bipartite graphs \cite{Farhi2020a}. Using similar arguments to what we provided in this section, that result can also be generalized to short-time QA algorithms. 

\section{Conclusions}\label{sec: conclusion}

Motivated by short-time quantum annealing, in this manuscript we proved limits on the output distribution of local time-dependent Hamiltonian evolutions. 
In particular, we showed that the distribution of the measurement outcome of a local Hamiltonian evolution is concentrated and satisfies an isoperimetric inequality if the evolution time is short. 
Using our framework, we  generalized results about limitations of low-depth quantum circuits, to short-time QA algorithms. To this end, an essential tool that we developed, is a Lieb-Robinson bound that works for time-dependent Hamiltonians and is amenable to QA algorithms. 

Our results are general in the sense that they work for the circuit model as well; they imply limitations on the output distribution of low-depth circuits with a locality structure. The point is that we can assume that the control parameters $u_e(t)$ in~\cref{eq:hamiltonian-t} have a so-called bang-bang form: they are either zero or take a fixed value, and for each $t$ the values of $u_{e_1}(t), u_{e_2}(t)$ for intersecting edges $e_1, e_2$ cannot both be non-zero. With such bang-bang control parameters, low-depth local circuits can be seen as special cases of short-time evolutions of local Hamiltonians for which our results hold.

The short-time condition is strictly essential for the bounds we have proven, and our results do not hold without this assumption. Indeed, there are several known examples where quantum annealing with $\textup{Poly}(n)$ run-time can provide superpolynomial speedups. For example, see \cite{Somma2012,Hastings2020,Gilyen2021}.

We believe that the techniques presented here can be applied to generalize other results about the limitation of low-depth circuits too, see \emph{e.g.}~\cite{Bravyi2021, Farhi2020b}, for short-time QA algorithm. We leave this for future research.

\bibliographystyle{quantum}

\onecolumn
\appendix

\section{Proof of the Lieb-Robinson Bound}\label{app: Lieb-robinson}

In this appendix we give a proof of \cref{thm: lieb-robinson}. As mentioned before, the proof ideas are essentially the same as those of the original Lieb-Robinson bound. Here, for simplicity of presentation of the proof we first give two lemmas.

\begin{lem}\label{lem:dif-eq-bound}
    (\cite{Nachtergaele2010}) Let $\vb X=\vb X(t)$ and $\vb Y=\vb Y(t)$ be time-dependent bounded operators with $\vb X$ being Hermitian. Let $\vb F=\vb F(t)$ be given by
    \begin{equation} \frac{\dd }{\dd t} \vb F = i[\vb X, \vb F] + \vb Y. \end{equation}
    Then we have
    \begin{equation} \frac{\dd}{\dd t} \|\vb F(t)\| \le \|\vb Y(t)\|. \end{equation}
\end{lem}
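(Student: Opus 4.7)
The approach I would take is the standard \emph{interaction picture} trick: remove the commutator with $\vb X(t)$ by conjugating with the unitary it generates, reducing the differential equation to a pure integration against $\vb Y(t)$, and then use that unitary conjugation preserves the spectral norm.

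Concretely, I would first introduce the unitary propagator $\vb W(t)$ defined by
\begin{equation}
\frac{\dd}{\dd t}\vb W(t) = i\vb X(t)\,\vb W(t), \qquad \vb W(0)=\vb I.
\end{equation}
Since $\vb X(t)$ is Hermitian, $\vb W(t)$ is unitary. Then I would define the conjugated operator $\vb G(t):=\vb W^\dagger(t)\,\vb F(t)\,\vb W(t)$ and compute $\dd \vb G/\dd t$ using the product rule together with the defining equation for $\vb F$. The two terms $\pm i\vb W^\dagger\vb X\vb F\vb W$ and $\pm i\vb W^\dagger\vb F\vb X\vb W$ coming from the commutator $i[\vb X,\vb F]$ cancel exactly against the contributions from $\dot{\vb W}$ and $\dot{\vb W}^\dagger$, leaving the clean identity
\begin{equation}
\frac{\dd}{\dd t}\vb G(t) = \vb W^\dagger(t)\,\vb Y(t)\,\vb W(t).
\end{equation}

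Integrating this from $t$ to $t+h$ gives $\vb G(t+h)-\vb G(t) = \int_t^{t+h}\vb W^\dagger(s)\vb Y(s)\vb W(s)\,\dd s$, whose norm is bounded by $\int_t^{t+h}\|\vb Y(s)\|\,\dd s$ using unitary invariance of the spectral norm and the triangle inequality for the Bochner integral. To transfer this bound back to $\|\vb F(t)\|$, I would use that $\|\vb G(t)\|=\|\vb F(t)\|$, again by unitary invariance, together with the reverse triangle inequality $\bigl|\|\vb G(t+h)\|-\|\vb G(t)\|\bigr|\leq \|\vb G(t+h)-\vb G(t)\|$. Dividing by $h>0$ and letting $h\to 0^+$ yields the desired bound $\frac{\dd}{\dd t}\|\vb F(t)\|\leq \|\vb Y(t)\|$.

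The one subtlety worth flagging, which I expect to be the main technical nuisance rather than a deep obstacle, is that the map $t\mapsto \|\vb F(t)\|$ is in general only Lipschitz, not differentiable everywhere (spectral norms of operator-valued paths can have corners at crossing eigenvalues). So strictly speaking the statement should be interpreted as a bound on the upper Dini derivative, which is what the reverse-triangle-inequality argument above produces directly; equivalently, one can state the conclusion in integrated form $\|\vb F(t)\|\leq \|\vb F(0)\|+\int_0^t\|\vb Y(s)\|\,\dd s$, which is what is actually needed in the subsequent Lieb--Robinson iteration.
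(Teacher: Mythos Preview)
Your proof is correct, and the core idea---that unitary conjugation kills the commutator term and preserves the spectral norm---is the same one the paper uses. The implementation, however, differs. The paper works directly with the difference quotient of $\|\vb F(t)\|$: it expands $\vb F(t+h)$ to first order in $h$, rewrites $\vb F + ih[\vb X,\vb F]$ as $(\vb I+ih\vb X)\vb F(\vb I-ih\vb X)$ up to $O(h^2)$, replaces this by $\e^{ih\vb X}\vb F\e^{-ih\vb X}$ (again up to $O(h^2)$), and then invokes unitary invariance to see that the first term contributes nothing to the limit. Your route is the global interaction-picture version: you build the full propagator $\vb W(t)$, conjugate once and for all, and reduce the equation to $\dot{\vb G}=\vb W^\dagger \vb Y\vb W$. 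Your approach is arguably cleaner and sidesteps the somewhat informal $O(h^2)$ bookkeeping; it also yields the integrated inequality $\|\vb F(t)\|\leq \|\vb F(0)\|+\int_0^t\|\vb Y(s)\|\,\dd s$ directly, which is in fact what the subsequent Lieb--Robinson argument uses. Your remark about the Dini-derivative subtlety is well taken and is glossed over in the paper's version.
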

\begin{proof}
Writing the definition of the derivative we have
\begin{equation}
\begin{aligned}
\frac{\dd}{\dd t}\|\vb F\| &= \lim_{h\to 0} \frac{\|\vb F(t+h)\| - \|\vb F(t)\|}{h} \\
&= \lim_{h\to 0} \frac{\|\vb F(t) + ih[\vb X(t), \vb F(t)] + h\vb Y(t)\| + O(h^2) - \|\vb F(t)\|}{h} \\
&= \lim_{h\to 0} \frac{\|(I + ih\vb X)\vb F(I - ih\vb X) + ih^2\vb X\vb F\vb X + h\vb Y\| - \|\vb F\|}{h} \\
&= \lim_{h\to 0} \frac{\|\e^{ ih\vb X}\vb F \e^{- ih\vb X} + ih^2\vb X\vb F\vb X + h\vb Y\| + O(h^2) - \|\vb F\|}{h} \\
&\le \lim_{h\to 0} \frac{\|\e^{ ih\vb X}\vb F \e^{- ih\vb X}\| - \|\vb F\|}{h}  + h\|\vb X\vb F \vb X\|+ \|\vb Y\| \\
&= \|\vb Y\|\, ,
\end{aligned}
\end{equation}
where in the fourth line we use $\e^{\pm ih\vb X} = I\pm i h\vb X + O(h^2)$, and in the last line we use the fact that $\vb X$ is Hermitian and $e^{\pm ih \vb X}$ is unitary.
\end{proof}

Here is our second building block of the proof.

\begin{lem}
\label{lem:Liebunit}
    Using the notation of \cref{thm: lieb-robinson} and assuming that $L>1$ we have
    \begin{equation} \label{eq:lem-liebunit}
        \big\|\vb U(t)^\dagger \vb O \vb U(t) - \vb V(t)^{\dagger} \vb O\vb V(t)\big\| \le 2\|\vb O_A\| \sum_{e\sim A}\int_0^t |u_e(\tau)|\cdot  \big\|\vb U^\dagger(\tau) \vb h_e \vb U(\tau)- \vb V(\tau)^{\dagger}\vb h_e \vb V(\tau)\big\| \dd \tau\, ,
    \end{equation} 
    where \( \vb{\widetilde{H}}_A(t) := \sum_{e \subseteq A\cup \partial_L (A)} u_e(t)\vb h_e \) is the Hamiltonian consisting of terms in the region \(A \cup \partial_L (A)\), and $\vb V(t)$ is the unitary evolution associated to $\vb{\widetilde{H}}_A(t)$. Also, by $e\sim A$ we mean $e$ intersects $A$. Moreover, if $A=e_0$ for some edge $e_0$ and $\vb O_A=\vb  h_{e_0}$, then the sum in \cref{eq:lem-liebunit} can be further restricted to $e\neq e_0$.
\end{lem}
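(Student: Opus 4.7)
The plan is to introduce
\[
\vb F(t) := \vb U^\dagger(t)\vb O_A\vb U(t) - \vb V^\dagger(t)\vb O_A\vb V(t),
\]
which vanishes at $t=0$, and derive a differential inequality for $\|\vb F(t)\|$ that is then integrated. A direct differentiation via the Heisenberg equations yields
\[
\dot{\vb F}(t) = i\vb U^\dagger[\vb H,\vb O_A]\vb U - i\vb V^\dagger[\vb{\widetilde H}_A,\vb O_A]\vb V.
\]
The crucial structural observation is that every edge $e$ with at least one endpoint in $A$ lies in $A\cup\partial_L(A)$: its $A$-endpoint belongs to $\partial_L(A)$ (since the other endpoint is at graph distance $1\le L$), and the other endpoint is either in $A$ or, being at distance $1$ from $A$, in $\partial_L(A)$. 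Hence $\vb H-\vb{\widetilde H}_A$ is supported on edges disjoint from $A$ and therefore commutes with $\vb O_A$, giving $[\vb H,\vb O_A]=[\vb{\widetilde H}_A,\vb O_A]=\sum_{e\sim A} u_e(t)[\vb h_e,\vb O_A]$.

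Next I would use the elementary identity $[\vb A,\vb B]-[\vb A',\vb B']=[\vb A-\vb A',\vb B]+[\vb A',\vb B-\vb B']$ with $\vb A_e=\vb U^\dagger\vb h_e\vb U$, $\vb B=\vb U^\dagger\vb O_A\vb U$ and their $\vb V$-conjugated analogues (noting $\vb B-\vb B'=\vb F$) to rewrite
\[
\dot{\vb F}(t) = i[\vb X(t),\vb F(t)] + \vb Y(t),
\]
where $\vb X(t):=\sum_{e\sim A} u_e(t)\vb V^\dagger(t)\vb h_e\vb V(t)$ is Hermitian and $\vb Y(t):=i\sum_{e\sim A} u_e(t)[\vb U^\dagger\vb h_e\vb U-\vb V^\dagger\vb h_e\vb V,\,\vb U^\dagger\vb O_A\vb U]$. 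Applying \cref{lem:dif-eq-bound} gives $\frac{\dd}{\dd t}\|\vb F(t)\|\le\|\vb Y(t)\|$, and the standard bound $\|[\vb P,\vb Q]\|\le 2\|\vb P\|\|\vb Q\|$ together with the unitary invariance $\|\vb U^\dagger\vb O_A\vb U\|=\|\vb O_A\|$ yields $\|\vb Y(t)\|\le 2\|\vb O_A\|\sum_{e\sim A}|u_e(t)|\cdot\|\vb U^\dagger\vb h_e\vb U-\vb V^\dagger\vb h_e\vb V\|$. Integrating from $0$ to $t$ with $\vb F(0)=0$ produces the claimed inequality.

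For the last assertion, when $A=e_0$ and $\vb O_A=\vb h_{e_0}$, the $e=e_0$ summand of $\sum_{e\sim A} u_e(t)[\vb h_e,\vb O_A]$ is $[\vb h_{e_0},\vb h_{e_0}]=0$ and drops out from the very start of the computation, so the entire derivation —and hence the resulting bound— restricts to $e\ne e_0$. The main technical care in the argument is exactly the commutator-rewriting step: one must massage $\dot{\vb F}$ into the precise form $i[\vb X,\vb F]+\vb Y$ with $\vb X$ Hermitian so that \cref{lem:dif-eq-bound} applies, since a naive estimate of the two terms of $\dot{\vb F}$ separately would only give a bound in terms of $\|\vb O_A\|$ with no self-improving recursion. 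Once the rewriting is done and the $L>1$ hypothesis is used to kill $[\vb H-\vb{\widetilde H}_A,\vb O_A]$, the remaining steps are essentially bookkeeping.
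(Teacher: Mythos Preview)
Your proposal is correct and follows essentially the same route as the paper: define $\vb F(t)$, differentiate via the Heisenberg equations, use locality (with $L>1$) to reduce both commutators to $\sum_{e\sim A}u_e(t)[\vb h_e,\vb O_A]$, rewrite $\dot{\vb F}$ in the form $i[\vb X,\vb F]+\vb Y$, apply \cref{lem:dif-eq-bound}, and integrate. The only cosmetic difference is that you split the commutator identity the other way around, taking $\vb X$ to be the $\vb V$-conjugated sum (so $\vb Y$ carries $\vb U^\dagger\vb O_A\vb U$), whereas the paper puts the $\vb U$-conjugation in $\vb X$ and the $\vb V$-conjugation in $\vb Y$; by unitary invariance of the operator norm both choices yield the same bound.
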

\begin{proof}
Let
\begin{equation}\vb F(t):=\vb U(t)^\dagger \vb O \vb U(t) - \vb V(t)^{\dagger} \vb O\vb V(t).\end{equation}
Then from Schrodinger's equation we have
\begin{equation} \label{eq:F(t)-derivative-6}\begin{aligned}
    \frac{\dd}{\dd t}\vb F(t) &= i \vb U(t)^{\dagger} \big[\vb H(t),\vb O\big ]\vb U(t) -  i \vb V(t)^{\dagger} \big[\vb{ \widetilde H}(t),\vb O\big ]\vb V(t)\\
        & = \sum_{e\sim A}  i \vb U(t)^{\dagger} \big[\vb h_e(t),\vb O\big ]\vb U(t) -  i \vb V(t)^{\dagger} \big[\vb h_e(t),\vb O\big ]\vb V(t)\\
    & = \sum_{e\sim A}  i  \big[\vb U(t)^{\dagger}\vb h_e(t)\vb U(t),\vb U(t)^{\dagger}\vb O\vb U(t)^{\dagger}\big ] -  i   \big[\vb V(t)^{\dagger} {\vb h}_e(t)\vb V(t),\vb V(t)^{\dagger}\vb O \vb V(t)\big ] \\
        & = \sum_{e\sim A}  i  \big[\vb U(t)^{\dagger}\vb h_e(t)\vb U(t),\vb F(t)+\vb V(t)^{\dagger}\vb O\vb V(t)^{\dagger}\big ] -  i   \big[\vb V(t)^{\dagger} {\vb h}_e(t)\vb V(t),\vb V(t)^{\dagger}\vb O \vb V(t)\big ] \\
            & =   i  \Big[\sum_{e\sim A} \vb U(t)^{\dagger}\vb h_e(t)\vb U(t),\vb F(t)\Big ] +  i \sum_{e\sim A}  \big[\vb U(t)^{\dagger} {\vb h}_e(t)\vb U(t)-\vb V(t)^{\dagger} {\vb h}_e(t)\vb V(t),\vb V(t)^{\dagger}\vb O \vb V(t)\big ] \, .\\
    \end{aligned} \end{equation}
Therefore, using $\vb F(0) =0$ and \cref{lem:dif-eq-bound} we can bound
\begin{equation} \begin{aligned}
\|\vb F(t)\| &= \left\|  \int_0^t \frac{\dd}{\dd \tau }\vb F(\tau) \dd \tau\right\| \\
&\leq \int_0^t \Big\|\frac{\dd}{\dd \tau }\vb F(\tau)\Big\| \dd \tau\\
&\leq \int_0^t \sum_{e\sim A}  \big\|\big[\vb U(\tau)^{\dagger} {\vb h}_e(\tau)\vb U(\tau)-\vb V(\tau)^{\dagger} {\vb h}_e(\tau)\vb V(\tau),\vb V(\tau)^{\dagger}\vb O \vb V(\tau)\big ]\big\| \dd \tau\\
&\leq 2\|\vb O\| \sum_{e\sim A}\int_0^t |u_e(\tau)|\cdot   \big\|\vb U(\tau)^{\dagger} {\vb h}_e\vb U(\tau)-\vb V(\tau)^{\dagger} {\vb h}_e\vb V(\tau)\big\|\dd \tau\, ,
\end{aligned} 
\end{equation}
where we use $\|[\vb X, \vb Y]\|\leq 2\|\vb X\| \|\vb Y\|$ in the last line.
Note that if $\vb O_A=\vb h_{e_0}$, in the second line of \cref{eq:F(t)-derivative-6} and thereafter, we can further restrict the sum to $e\neq e_0$.

\end{proof}

We can now give the proof of \cref{thm: lieb-robinson}. 
As before, let $\vb F(t):=\vb U(t)^\dagger \vb O \vb U(t) - \vb V(t)^{\dagger} \vb O\vb V(t)$. By \cref{lem:Liebunit} we have
\begin{equation}  
\|\vb F(T)\| \leq 2\|\vb O\| \sum_{e_1\sim A}\int_0^T |u_{e_1}(t_1)|\cdot   \big\|\vb U(t_1)^{\dagger} {\vb h}_{e_1}\vb U(t_1)-\vb V(t_1)^{\dagger} {\vb h}_{e_1}\vb V(t_1)\big\|\dd t_1\, .
\end{equation}
Assuming that $L>2$, we can apply \cref{lem:Liebunit} on each of the terms in the above sum, in which $A=e_1$ and $\vb O_A$ is replaced by $\vb h_{e_1}$. Therefore,
\begin{equation} \begin{aligned} 
\|\vb F(T)\| 
&\le 2\|\vb O_A\| \sum_{e_1 \sim A } \int_0^{T}  2\|u_{e_1}(t)\vb h_{e_1}\| \int_0^{t_1} \sum_{\stackrel{e_2 \sim e_1}{e_2\neq e_1}} |u_{e_2}(t_2)|\cdot \|\vb U(t_2)^\dagger \vb h_{e_2} \vb U(t_2) -\vb V(t_2)^\dagger \vb h_{e_2} \vb V(t_2)\| \dd t_2 \dd t_1 \\
&\le 2\|\vb O_A\|\, 2g \sum_{e_1 \sim A }\sum_{\stackrel{e_2 \sim e_1}{e_2\neq e_1}} \int_0^{T}  \int_0^{t_1}  |u_{e_2}(t_2)|\cdot \|\vb U(t_2)^\dagger \vb h_{e_2} \vb U(t_2) -\vb V(t_2)^\dagger \vb h_{e_2} \vb V(t_2)\| \dd t_2 \dd t_1 \, .
\end{aligned} \end{equation}
We can repeat the same process for $L$ times to obtain
\begin{equation} \begin{aligned} 
&\frac{\|\vb F(T)\|}{2\|\vb O_A\| (2g)^{L-1}}\\
&\le \sum_{\stackrel{(e_1, e_2, \dots, e_L):} {\text{ walk starting at }A} } \int_0^{T} \int_0^{t_1} \dots \int_0^{t_{L-1}} |u_{e_L}(t_L)|\cdot \|\vb U(t_L)^\dagger \vb h_{e_L} \vb U(t_L) -\vb V(t_L)^\dagger \vb h_{e_2} \vb V(t_L)\| \dd t_L \dots \dd t_2 \dd t_1\\
&\le \sum_{\stackrel{(e_1, e_2, \dots, e_L):} {\text{ walk starting at }A} } \int_0^{T} \int_0^{t_1} \dots \int_0^{t_{L-1}} 2g~ \dd t_L \dots \dd t_2 \dd t_1\\
& = 2g \frac{T^L}{L!} N_L\, ,
\end{aligned} 
\end{equation}
where $N_L$ denotes the \emph{number of walks} of length $L$ starting at $A$, i.e., $N_L$ is the number of sequences of edges $(e_1, \dots, e_L)$ such that $e_1\sim A$, $e_i\sim e_{i+1}$ and $e_i\neq e_{i+1}$ for $i=1, \dots ,L-1$. 
We note that as the maximum degree of the graph $G$ is $\Delta$, the number of choices for $e_1$ is at most $\Delta|A|\leq 2(\Delta-1)|A|$. Moreover, the number of choices for $e_{i+1}$ given $e_i$ is at most $2(\Delta-1)$.  Thus $N_L\leq (2(\Delta-1))^L|A|$. Therefore,
using the Stirling's approximation we have
\begin{equation} \begin{aligned}
\|\vb F(T)\| 
&\le 2|A| ~ \|\vb O_A\|  \frac{\big(4g(\Delta-1) T\big)^L}{L!} \\
&\le 2|A| \cdot \|\vb O_A\|  \big(4g(\Delta-1) T\big)^L \frac{1}{\sqrt{2\pi}}\e^{-L \log L + L - \frac{1}{2} \log L}\\
&\le \sqrt{\frac{2}{\pi}} |A| \cdot \|\vb O_A\|  \e^{-L\big(\log L - \log T - \log {(4g(\Delta-1))}\big) - \frac 12 \log L}\, .
\end{aligned} \end{equation}

\section{Proof of \texorpdfstring{\Cref{thm:Chebyshev}}{Theorem 3}}\label{app: Chebyshev-bound}
Let $\vb*\zeta=\sum_i \mathbf Z_i$ and use
Markov's inequality to obtain
\begin{align}
\text{Pr}\left[\left|w_H(\vb x)-\left(\frac n 2 - \frac m 2\right)\right|\ge cn^{\frac 1 2 + \kappa_2}\right] &= \text{Pr}\left[\left|\bra{\vb x}\vb*\zeta \ket{\vb x}- m\right|\ge 2cn^{\frac 1 2 + \kappa_2}\right] \\ 
&\le \frac{1}{\left(2cn^{\frac 1 2 + \kappa_2}\right)^2}\mathbb{E}\left[\big(\bra{\vb x}\vb*\zeta \ket{\vb x}- m\big)^2\right]\, .\label{eq:chebyshev-proof-markov}
\end{align}
Thus to prove the theorem it suffices to bound $\mathbb{E}\big[\big(\bra{\vb x}\vb*\zeta \ket{\vb x}- m\big)^2\big]$.

Let $m_i = \bra{\psi_T} \vb Z_i \ket{\psi_T}$. Then we have
\begin{equation}\begin{aligned}
    & \mathbb{E}\left[\left(\bra{\vb x}\vb*\zeta \ket{\vb x}- m\right)^2\right] \\
    &= \mathbb{E}\left[\left(\sum_i \bra{\vb x}\vb Z_i \ket{\vb x}- m_i\right)^2\right] \\ 
    &=\sum_{i,j}\mathbb{E}\big[\big(\bra{\vb x}\vb Z_i \ket{\vb x}- m_i\big)\cdot\big(\bra{\vb x}\vb Z_j \ket{\vb x}- m_j\big)\big]\\
    &=\sum_{i,j}\bra{\psi_T}(\vb{Z}_{i}-m_i)(\mathbf{Z}_{j}-m_j)\ket{\psi_T}\\
    &= \sum_{\substack{i,j \\ \dist_G(i,j)\leq 2L}}\bra{\psi_T}(\vb{Z}_{i}-m_i)(\mathbf{Z}_{j}-m_j)\ket{\psi_T}+ \sum_{\substack{i,j \\ \dist_G(i,j)> 2L}}\bra{\psi_T}(\vb{Z}_{i}-m_i)(\mathbf{Z}_{j}-m_j)\ket{\psi_T}\, ,
\end{aligned}
\end{equation}
where in the third line we use the fact that $\vb Z_i$ is diagonal in the computational basis, $\dist_G(i,j)$ denotes the distance of vertices $i, j$ on the graph $G$, and $L$ is given by 
\begin{align}\label{eq:chebyshev-L}
L= \frac {\kappa_1}{2\log \Delta}\log n\geq 4g \Delta^{\frac{2-\kappa_1}{\kappa_1}}T.
\end{align}
We analyze terms in the above sum separately. If $\dist_G(i, j)\leq 2L$, then by the 
Cauchy-Schwarz inequality we have
\begin{align}
    \bra{\psi_T}\left(\mathbf Z _i - m_i  \right)\left(\mathbf Z _j - m_j \right)\ket{\psi_T} &\leq \sqrt{\bra{\psi_T}\left(\mathbf Z _i - m_i \right)^2\ket{\psi_T}\cdot \bra{\psi_T}\left(\mathbf Z _j - m_j \right)^2\ket{\psi_T}}\\
    &=\sqrt{(1-m_i^2)(1-m_j^2)}\\
    &\leq 1\, .
\end{align}
Next, by the Lieb-Robinson bound (Theorem~\ref{thm: lieb-robinson}), for any $i$ there is an operator $\vb {\tilde Z}_i$ that acts only on $\partial_L(\{i\})$ and 
\begin{equation}\| \vb U(T)^\dagger (\vb Z_i-m_i) \vb U(T) - \vb {\tilde Z}_i\|\leq   \e^{-L\big(\log L - \log T - \log {4g\Delta}\big)}\leq \e^{-L\Big(2\frac{1-\kappa_1}{\kappa_1}\log\Delta\Big)}=  n^{-(1-\kappa_1)}\, ,\end{equation}
where the second inequality follows from~\eqref{eq:chebyshev-L}.
Therefore, letting $\vb \Upsilon_i=\vb U(T)^\dagger (\vb Z_i-m_i) \vb U(T) - \vb {\tilde Z}_i$, if $\dist_G(i, j)>2L$ we have
\begin{equation}\begin{aligned}
&\bra{\psi_T}\left(\mathbf Z _i - m_i  \right)\left(\mathbf Z _j - m_j \right)\ket{\psi_T} \\
& = \bra{\psi_0} \vb U(T)^\dagger\left(\mathbf Z _i - m_i  \right)\left(\mathbf Z _j - m_j \right) \vb U(T)\ket{\psi_0}\\
& = \bra{\psi_0}\left( \tilde{\mathbf Z}_i+\mathbf{\Upsilon}_i\right)\left( \tilde{\mathbf Z}_j+\mathbf{\Upsilon}_j\right)\ket{\psi_0} \\
&= \bra{\psi_0} \tilde{\mathbf Z}_i\tilde{\mathbf Z}_j\ket{\psi_0}+\bra{\psi_0}\mathbf{\Upsilon}_i \tilde{\mathbf Z}_j\ket{\psi_0}+\bra{\psi_0}\tilde{\mathbf Z}_i\mathbf{\Upsilon}_j\ket{\psi_0}+\bra{\psi_0}\mathbf{\Upsilon}_i\mathbf{\Upsilon}_j\ket{\psi_0} \\
&= \bra{\psi_0} \tilde{\mathbf Z}_i\ket{\psi_0}\bra{\psi_0}\tilde{\mathbf Z}_j\ket{\psi_0}+\bra{\psi_0}\mathbf{\Upsilon}_i \tilde{\mathbf Z}_j\ket{\psi_0}+\bra{\psi_0}\tilde{\mathbf Z}_i\mathbf{\Upsilon}_j\ket{\psi_0}+\bra{\psi_0}\mathbf{\Upsilon}_i\mathbf{\Upsilon}_j\ket{\psi_0} \, ,
\end{aligned}\end{equation}
where in the last line we use the facts that $\ket{\psi_0}$ is a product state and that the supports of $\tilde{\vb Z}_i, \tilde{\vb Z}_j$ do not intersect. Now we have
\begin{align}
    |\bra{\psi_0} \tilde{\mathbf Z}_i\ket{\psi_0}|& = |\bra{\psi_T} ({\mathbf Z}_i -m_i)\ket{\psi_T} -\bra{\psi_0} \vb\Upsilon_i\ket{\psi_0}|\\
    &= |\bra{\psi_0} \vb\Upsilon_i\ket{\psi_0}|\\
    &\leq  \|\vb \Upsilon_i\|\\
    & \leq  n^{-(1-\kappa_1)}.
\end{align}
Using this, and similar inequalities, we find that
\begin{align}
    \bra{\psi_T}\left(\mathbf Z _i - m_i  \right)\left(\mathbf Z _j - m_j \right)\ket{\psi_T}\leq 2 \left(  n^{-(1-\kappa_1)} \right)^2 + 2 n^{-(1-\kappa_1)}\leq 4 n^{-(1-\kappa_1)}.
\end{align}
Putting these together yields
\begin{align}
    \mathbb{E}\left[\left(\bra{\vb x}\vb*\zeta \ket{\vb x}- m\right)^2\right] &\leq \sum_{\substack{i,j \\ \dist_G(i,j)\leq 2L}} 1+ \sum_{\substack{i,j \\ \dist_G(i,j)> 2L}}4n^{-(1-\kappa_1)}\\
    & \leq 2n  \Delta^{2L} + n^2(4 n^{-(1-\kappa_1)})\\
    & = 6n^{1+\kappa_1},
\end{align}
where in the second line we use
\begin{align}\label{eq:neighborhood-L-bound}
1+\Delta\Big(1+(\Delta-1)+(\Delta-1)^2+\cdots (\Delta-1)^{2L-1}\Big)\leq 2\Delta^{2L}.
\end{align}
Using this in~\eqref{eq:chebyshev-proof-markov} the desired inequality follows.

\section{Derivative of a matrix function}\label{app: gamma2-for-polys}

In this appendix we compute a bound on the derivative of a certain polynomial function of matrices. This bound is an important ingredient in the proof of \cref{thm: expansion-like}. Indeed, in the proof, applying the Lieb-Robinson bound, a desired operator is approximated by a local one. Then, a polynomial function is applied on this operator, and the blow-up of the error of the Lieb-Robinson approximation induced by this polynomial should be estimated. We note that a naive approach to estimate this error is ineffective since the degree of this polynomial is large and some of its coefficients are exponentially large. Therefore, we need a more careful analysis of this error which itself demands a more careful analysis of the derivative of that polynomial as a map on the space of matrices.

 We first need to fix some notations. Given two matrices $\vb A$ and $\vb B$ of the same dimensions, we denote their \emph{entry-wise product} by $\vb A\circ \vb B$. In other words, $\vb A\circ \vb B$ is a matrix whose $(i, j)$-th entry equals the product of $(i, j)$-th entries of $\vb A$ and $\vb B$, i.e., $(\vb A\circ \vb B)_{ij} = \vb A_{ij}\cdot \vb B_{ij}$. The matrix $\vb A\circ \vb B$ is called the Schur or Hadamard product of $\vb A$ and $\vb B$.

For a continuously differentiable function $f:(a, b)\to \mathbb R$ and a sequence $\boldsymbol{\lambda}=(\lambda_1, \dots, \lambda_d)$ of numbers in $(a, b)$ we let $\mathcal D_{f, \boldsymbol{\lambda}}$ be the $d\times d$ matrix whose $(i,j)$-th entry equals
\begin{align}
    \big(\mathcal D_{f, \boldsymbol{\lambda}}\big)_{ij} = \begin{cases}
    \frac{f(\lambda_i)-f(\lambda_j)}{\lambda_i-\lambda_j} &\quad \lambda_i\neq \lambda_j,\\
    f'(\lambda_i)& \quad \lambda_i=\lambda_j,
    \end{cases}
\end{align}
where $f'(\lambda_i)$ is the derivative of $f$ at $\lambda_i$.
For a diagonal matrix $\vb \Lambda=\text{diag}(\boldsymbol{\lambda})$ with diagonal elements in $\boldsymbol{\lambda} =(\lambda_1,\dots, \lambda_d)$ we let 
\begin{align}
    \mathcal D_{f, \vb \Lambda} := \mathcal D_{f, \boldsymbol{\lambda}}.
\end{align}
Also, for a self-adjoint matrix $\vb A$, that can be diagonalized as 
$\vb A= \vb U \vb \Lambda\vb U^{\dagger}$ where $\vb U$ is unitary and $\vb \Lambda = \text{diag}(\boldsymbol{\lambda})$ is a diagonal matrix containing eigenvalues of $\vb A$, we let 
\begin{align}
\mathcal D_{f, \vb A} = \vb U \mathcal D_{f, \vb \Lambda} \vb U^{\dagger}.
\end{align}

\begin{lem}{\cite[pp.~124]{Bhatia1996}}\label{lem: Operator-difference}
Let $\vb A, \vb E$ be two Hermitian matrices and $f$ be a real continuously differentiable function on an interval containing the eigenvalues of $\vb A+t\vb E$ for all $t\in (-1, 1)$. Then we have   
\begin{equation}
    \dv{t} f\left(\vb A + t \vb E\right)\Big|_{t=0} = \mathcal D_{f,\vb A}\circ \vb E\, .
\end{equation}

\end{lem}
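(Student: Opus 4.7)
The plan is to establish this Daleckii--Krein-type identity in three steps: first reduce to the case where $\vb A$ is diagonal, then verify the identity by a direct computation for polynomial $f$, and finally extend to general $C^1$ functions through polynomial approximation. Writing $\vb A = \vb U \vb \Lambda \vb U^\dagger$, the functional calculus gives $f(\vb A + t\vb E) = \vb U f(\vb \Lambda + t \tilde{\vb E})\vb U^\dagger$ with $\tilde{\vb E} := \vb U^\dagger \vb E \vb U$; differentiating at $t=0$ and using the defining equality $\mathcal D_{f,\vb A} = \vb U \mathcal D_{f, \vb \Lambda} \vb U^\dagger$, the claim reduces to the analogous identity $\dv{t} f(\vb \Lambda + t\tilde{\vb E})|_{t=0} = \mathcal D_{f, \vb \Lambda} \circ \tilde{\vb E}$ in the eigenbasis of $\vb A$ (where the Schur product is naturally defined).

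For the polynomial case it suffices by linearity to take $f(x)=x^n$. Expanding $(\vb \Lambda + t\tilde{\vb E})^n$ and retaining only the first-order terms in $t$ yields $\sum_{k=0}^{n-1}\vb \Lambda^k \tilde{\vb E}\, \vb \Lambda^{n-1-k}$, whose $(i,j)$-entry equals $\tilde{\vb E}_{ij}\sum_{k=0}^{n-1}\lambda_i^k\lambda_j^{n-1-k}$. When $\lambda_i\neq\lambda_j$ this sum telescopes to $(\lambda_i^n-\lambda_j^n)/(\lambda_i-\lambda_j)$, and when $\lambda_i=\lambda_j$ it reduces to $n\lambda_i^{n-1}=f'(\lambda_i)$; both expressions coincide with $(\mathcal D_{f,\vb \Lambda})_{ij}$ by definition, so the Schur-product identity is established for every polynomial.

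To pass from polynomials to general $C^1$ functions, I fix a compact interval $I$ containing the spectra of $\vb \Lambda + t\tilde{\vb E}$ for all $|t|\leq 1$, and use Weierstrass' theorem to choose polynomials $q_n \to f'$ uniformly on $I$. Setting $p_n(x) := f(a) + \int_a^x q_n(s)\,ds$ for a fixed $a\in I$ produces polynomials with $p_n \to f$ and $p_n' \to f'$ uniformly on $I$. Applying the polynomial case to each $p_n$ and sending $n\to\infty$, the right-hand side $\mathcal D_{p_n,\vb \Lambda} \circ \tilde{\vb E}$ converges entrywise to $\mathcal D_{f,\vb \Lambda} \circ \tilde{\vb E}$ (uniform convergence of $p_n$ handles entries where $\lambda_i\neq \lambda_j$; uniform convergence of $p_n'$ handles the diagonal entries where $\lambda_i=\lambda_j$). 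On the left, the standard spectral estimate $\|p_n(\vb X) - f(\vb X)\| \leq \sup_I |p_n - f|$ for any Hermitian $\vb X$ with spectrum in $I$ yields uniform convergence in $t$, and an analogous bound for the $t$-derivative justifies interchanging limit and derivative.

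The genuine obstacle is the derivative--limit exchange in the third step: pointwise convergence $p_n(\vb \Lambda + t\tilde{\vb E}) \to f(\vb \Lambda + t\tilde{\vb E})$ alone is not enough to pass $\dv{t}$ through the limit. This is exactly why one approximates $f'$ first and recovers $f$ by integration, rather than approximating $f$ directly and attempting to differentiate: uniform convergence of the derivatives of the polynomials is then built into the construction, and it transfers, via the spectral-norm bound applied to the Schur-product representation of $\dv{t}p_n(\vb \Lambda + t\tilde{\vb E})$, to uniform convergence of the operator-valued derivatives on a neighbourhood of $t=0$, legitimising the final limit exchange.
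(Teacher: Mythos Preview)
The paper does not actually prove this lemma; it simply cites \cite[pp.~124]{Bhatia1996} and moves on. Your proposal is a correct and self-contained proof, and it follows the standard Daleckii--Krein route (reduce to diagonal $\vb A$, verify for monomials by expanding $(\vb\Lambda+t\tilde{\vb E})^n$, then pass to $C^1$ functions by approximating $f'$ uniformly and integrating). This is essentially the argument in Bhatia's book, so there is nothing to compare: your proof supplies what the paper only references.

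One small clarification worth making explicit in your write-up: the reduction step works because the identity $\mathcal D_{f,\vb A}\circ\vb E$ in the paper is to be read as the Schur product taken in the eigenbasis of $\vb A$, i.e., $\vb U\big(\mathcal D_{f,\vb\Lambda}\circ(\vb U^\dagger\vb E\vb U)\big)\vb U^\dagger$; you note this parenthetically, but since the Schur product is basis-dependent it is worth stating outright. For the limit exchange in step~3, the cleanest justification is that by the mean value theorem every entry of $\mathcal D_{p_n,\boldsymbol\lambda}-\mathcal D_{p_m,\boldsymbol\lambda}$ is bounded by $\sup_I|p_n'-p_m'|$ uniformly in $\boldsymbol\lambda\in I^d$, which immediately gives uniform-in-$t$ Cauchy convergence of $\dv{t}p_n(\vb\Lambda+t\tilde{\vb E})$ and hence the differentiability of the limit.
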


By this lemma, to bound the derivative of a matrix function as above, we need to bound the norm of the Schur product as a super-operator on the space of matrices. This super-operator norm is given by the so called $\gamma_2$-norm.

\begin{lem}{\cite[pp.~79]{bhatia2007}}
For a $d\times d$ matrix $\vb M$ let
\begin{align}\label{eq:def-gamma-2}
    \gamma_2(\vb M) := \inf\, \max\{\|\ket{v_i}\|^2:\, 1\leq i\leq d\}\cup \{\|\ket{w_j}\|^2:\, 1\leq j\leq d\},
\end{align}
where the infimum is taken over all vectors $\ket{v_i}, \ket{w_j}$ satisfying $\vb M_{ij} = \langle v_i\ket{w_j}$ for all $i, j$. Then for any $d\times d$ matrix $\vb E$ we have
\begin{align}
\|\vb M\circ \vb E\|\leq \gamma_2(\vb M) \|\vb E\|.
\end{align}
\end{lem}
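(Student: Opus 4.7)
The plan is to prove the bound by exhibiting a concrete operator-theoretic factorization of the Schur product. Given any representation $\vb M_{ij}=\braket{v_i}{w_j}$ with $\ket{v_i},\ket{w_j}\in \mathbb{C}^k$ for some auxiliary dimension $k$, I will write $\vb M\circ \vb E$ as a sandwich of $\vb E\otimes \vb I_k$ between two easily-controlled operators; submultiplicativity of the spectral norm will then transfer the bound to $\vb E$, and an infimum over all such factorizations will recover $\gamma_2(\vb M)$.

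First, I introduce two block-diagonal maps $\vb P,\vb Q:\mathbb{C}^d\to \mathbb{C}^d\otimes \mathbb{C}^k$ defined on the computational basis by
\begin{align*}
\vb P\ket{j}:=\ket{j}\otimes \ket{w_j},\qquad \vb Q\ket{i}:=\ket{i}\otimes \ket{v_i}.
\end{align*}
Because the registers in the first tensor factor are mutually orthogonal, the columns of $\vb P$ (respectively $\vb Q$) are mutually orthogonal, so for an arbitrary unit vector $\ket{x}=\sum_j x_j\ket{j}$ one has $\|\vb P\ket{x}\|^2=\sum_j |x_j|^2\,\|\ket{w_j}\|^2\leq \max_j \|\ket{w_j}\|^2$, with equality attained on a maximizing basis vector; hence $\|\vb P\|^2=\max_j\|\ket{w_j}\|^2$ and analogously $\|\vb Q\|^2=\max_i\|\ket{v_i}\|^2$. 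The next step is to verify the algebraic identity $\vb M\circ \vb E=\vb Q^\dagger(\vb E\otimes \vb I_k)\vb P$ by computing matrix elements:
\begin{align*}
\bra{i}\vb Q^\dagger(\vb E\otimes \vb I_k)\vb P\ket{j} &=(\bra{i}\otimes \bra{v_i})(\vb E\otimes \vb I_k)(\ket{j}\otimes \ket{w_j})\\
&=\vb E_{ij}\braket{v_i}{w_j}=(\vb M\circ \vb E)_{ij}.
\end{align*}

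From this factorization, submultiplicativity of the spectral norm together with $\|\vb E\otimes \vb I_k\|=\|\vb E\|$ gives
\begin{align*}
\|\vb M\circ \vb E\| &\leq \|\vb Q\|\cdot \|\vb E\|\cdot \|\vb P\|\\
&\leq \max_i\|\ket{v_i}\|\cdot \max_j\|\ket{w_j}\|\cdot \|\vb E\|\\
&\leq \max\big\{\max_i\|\ket{v_i}\|^2,\max_j\|\ket{w_j}\|^2\big\}\cdot \|\vb E\|,
\end{align*}
and taking the infimum over all admissible factorizations of $\vb M$ produces the claimed $\|\vb M\circ \vb E\|\leq \gamma_2(\vb M)\|\vb E\|$.

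The argument is entirely algebraic; there is essentially no analytic obstacle. The only point requiring care is the bookkeeping of complex conjugates so that $\braket{v_i}{w_j}$ emerges with the correct orientation from the sandwich, which dictates placing $\vb Q^\dagger$ on the left rather than $\vb Q$. I will also remark that the elementary inequality $\|\ket{v_i}\|\cdot \|\ket{w_j}\|\leq \max\{\|\ket{v_i}\|^2,\|\ket{w_j}\|^2\}$ used in the final line is loss-free at the infimum, since the rescaling $\ket{v_i}\mapsto \lambda\ket{v_i}$, $\ket{w_j}\mapsto \lambda^{-1}\ket{w_j}$ with real $\lambda>0$ preserves the factorization and can always be employed to balance the two maxima before taking the infimum.
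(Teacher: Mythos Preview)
Your proof is correct; it is the standard Haagerup factorization argument showing that the Schur multiplier norm is bounded by $\gamma_2$. The paper does not supply its own proof of this lemma---it simply quotes the result from \cite[pp.~79]{bhatia2007}---so there is nothing to compare against beyond noting that your argument is precisely the one found in Bhatia's text.
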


It is not hard to verify that $\gamma_2$-norm as defined in \cref{eq:def-gamma-2} satisfies the triangle inequality and is a norm. We also note that if $\vb M_{ij}=\bra{v_i}w_j\rangle$ and $\vb M'_{ij}=\bra{v'_i}{w'_j}\rangle$ for two matrices $\vb M$ and $\vb M'$, then we have $(\vb M\circ \vb M')_{ij}= \bra{\tilde v_{i}}\tilde w_j\rangle$ where $\ket{\tilde v_i} = \ket{v_i}\otimes \ket{v'_i}$ and $\ket{\tilde w_j} = \ket{w_j}\otimes \ket{w'_j}$. Therefore, 
\begin{align}\label{eq:gamma-product}
    \gamma_2(\vb M\circ \vb M')\leq \gamma_2(\vb M)\cdot \gamma_2(\vb M').
\end{align}

The following lemma is a simple consequence of the lemmas above. 

\begin{lem}\label{lem: gamma-2-norm}
    For any two  Hermitian matrices $\vb A, \vb E$ and  polynomial $P$, we have
\begin{align}
    \norm{P\left(\vb A + \vb E\right)-P\left(\vb A\right)} &\le \|\vb E\|\cdot \int_0^{1}\gamma_2(\mathcal D_{P, \vb A+ t\vb E}) \dd t \,.
\end{align}
\end{lem}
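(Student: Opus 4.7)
The plan is to combine the fundamental theorem of calculus on the parameterized curve $t \mapsto P(\vb A + t \vb E)$ with the derivative formula from the Bhatia-Daleckii-Krein identity, and then bound the resulting integrand via the $\gamma_2$-norm.

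First, I would write
\begin{equation*}
P(\vb A + \vb E) - P(\vb A) = \int_0^1 \frac{\dd}{\dd t} P(\vb A + t\vb E) \, \dd t,
\end{equation*}
which is valid since $P$ is a polynomial, so $t \mapsto P(\vb A + t\vb E)$ is smooth in $t$ with values in matrices. The norm of the left-hand side is then at most $\int_0^1 \big\| \tfrac{\dd}{\dd t} P(\vb A + t\vb E) \big\| \, \dd t$ by the triangle inequality for Bochner-type integrals.

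Next, I would identify the integrand with a Schur product using the previously stated Bhatia formula. For a fixed $t_0 \in [0,1]$, set $\vb A_{t_0} := \vb A + t_0 \vb E$ and reparametrize via $s \mapsto \vb A_{t_0} + s \vb E$; then
\begin{equation*}
\frac{\dd}{\dd t} P(\vb A + t \vb E)\Big|_{t = t_0} = \frac{\dd}{\dd s} P(\vb A_{t_0} + s \vb E)\Big|_{s = 0} = \mathcal D_{P, \vb A_{t_0}} \circ \vb E,
\end{equation*}
where the last equality is precisely the content of the Bhatia derivative lemma applied to the Hermitian matrices $\vb A_{t_0}$ and $\vb E$ (the polynomial $P$ is of course continuously differentiable on any interval containing the spectrum of $\vb A_{t_0} + s \vb E$).

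Finally, I would apply the Schur-product bound via the factorization norm: for each $t$,
\begin{equation*}
\big\| \mathcal D_{P, \vb A + t \vb E} \circ \vb E \big\| \le \gamma_2\big( \mathcal D_{P, \vb A + t \vb E} \big) \cdot \|\vb E\|.
\end{equation*}
Inserting this into the integral and pulling the constant $\|\vb E\|$ out yields the claimed bound. There is essentially no obstacle here; the only mild care needed is to justify differentiating under the spectral functional calculus at arbitrary $t_0$ rather than only at $0$, and this is handled cleanly by the reparametrization trick above. The substance of the lemma is really in having set up the $\gamma_2$-norm machinery earlier, so the proof reduces to chaining the ingredients in the correct order.
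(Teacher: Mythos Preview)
Your proposal is correct and follows essentially the same route as the paper: write $P(\vb A+\vb E)-P(\vb A)$ as $\int_0^1 \tfrac{\dd}{\dd t}P(\vb A+t\vb E)\,\dd t$, identify the derivative via the Bhatia/Daleckii--Krein formula as $\mathcal D_{P,\vb A+t\vb E}\circ \vb E$, bound each Schur product by $\gamma_2(\mathcal D_{P,\vb A+t\vb E})\|\vb E\|$, and finish with the triangle inequality. Your explicit reparametrization at $t_0$ is a small extra justification the paper leaves implicit, but otherwise the arguments coincide.
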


\begin{proof}
By \cref{lem: Operator-difference} we have
\begin{align}
    \norm{P\left(\vb A +  \vb E\right)-P\left(\vb A\right)}  = \int_0^1 \dv{t} P(\vb A+t\vb E)\dd t
     = \int_0^1 \big(\mathcal D_{P,\vb A+t\vb E}\circ \vb E\big) \, \dd t
\end{align}
Then the result follows using \cref{lem: gamma-2-norm} and the triangle inequality.
\end{proof}

In the proof of \cref{thm: expansion-like} we will use this lemma for a polynomial that is defined in terms of the Chebyshev's polynomials. Recall that the Chebyshev's polynomial $T_n(x)$ is defined by 
\begin{align}\label{eq:Chebyshev-poly}
    T_n(\cos \theta) = \cos(n\theta).
\end{align}
This equation shows that $|T_n(x)|\leq 1=T_n(1)$ if $|x|\leq 1$. Moreover, $T_n(x)$ as a degree $n$ polynomial, has $n$ roots in the interval $[-1, 1]$. Thus, it is monotone in the interval $[1, \infty)$ and we have  $\left|T_n(x)\right| \le T_n(1+\delta)$ for $x \in [-1, 1+\delta]$. The Chebyshev polynomials can also be defined by recursive equations:
we have $T_0(x)=1$, $T_1(x)=x$ and 
\begin{align}
    T_{2n+1}(x) &= 2T_{n+1}(x)T_n(x)-x, \label{eq:chebyshev-odd}\\
    T_{2n}(x) &= 2T_n^2(x)-1 \, .\label{eq:chebyshev-even}
\end{align}

\begin{lem}\label{lem: gamma_2 inequality}
For any $0\leq \delta<1$ and any sequence $\boldsymbol{\lambda}=(\lambda_1, \dots, \lambda_d)$ with $-1 \leq \lambda_i\leq 1 + \delta$ we have
\begin{equation}
    \gamma_2(\mathcal D_{T_n,\boldsymbol{\lambda}}) \le (2n^2-1)T_n(1 + \delta), \qquad \forall  n\ge 1 \, .
\end{equation}
\end{lem}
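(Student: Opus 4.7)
My plan is a strong induction on $n$, closely tracking the Chebyshev three-term recursions~\eqref{eq:chebyshev-odd}--\eqref{eq:chebyshev-even}. The key preliminary step is a Leibniz-type bound for the divided-difference matrix of a product of functions. Starting from the symmetric identity
\[
f(a)g(a)-f(b)g(b) = \tfrac12(f(a)+f(b))(g(a)-g(b)) + \tfrac12(g(a)+g(b))(f(a)-f(b)),
\]
which also reproduces the product rule on the diagonal $a=b$, one obtains
\[
\mathcal D_{fg,\boldsymbol\lambda} = \tfrac12\, S_f\circ \mathcal D_{g,\boldsymbol\lambda} + \tfrac12\, S_g\circ \mathcal D_{f,\boldsymbol\lambda},
\]
with $(S_f)_{ij}=f(\lambda_i)+f(\lambda_j)$. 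Combining sub-additivity and homogeneity of the $\gamma_2$-norm, the Schur-product estimate~\eqref{eq:gamma-product}, and the trivial rank-one factorization giving $\gamma_2\!\big((f(\lambda_i))_{ij}\big)\le \max_i |f(\lambda_i)|$ (and analogously for the column-constant part), this yields
\[
\gamma_2(\mathcal D_{fg,\boldsymbol\lambda})\le \max_i|f(\lambda_i)|\,\gamma_2(\mathcal D_{g,\boldsymbol\lambda}) + \max_i|g(\lambda_i)|\,\gamma_2(\mathcal D_{f,\boldsymbol\lambda}).
\]

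Writing $c_n := \gamma_2(\mathcal D_{T_n,\boldsymbol\lambda})$ and $M_n := T_n(1+\delta)$, and using $|T_n(x)|\le M_n$ on $[-1,1+\delta]$ together with $\gamma_2(\mathcal D_{\mathrm{const}})=0$ and $\gamma_2(\mathcal D_x)=1$, applying the Leibniz bound to~\eqref{eq:chebyshev-odd}--\eqref{eq:chebyshev-even} gives the recursive estimates
\[
c_{2m}\le 4M_m c_m,\qquad c_{2m+1}\le 2M_{m+1}c_m + 2M_m c_{m+1} + 1,
\]
with base case $c_1 = 1 \le M_1 = 1+\delta$. The target bound $c_n \le (2n^2-1)M_n$ is then verified by strong induction. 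The even step, after substituting $M_{2m} = 2M_m^2 - 1$, reduces to $(8m^2+2)M_m^2 \ge 8m^2-1$, which is immediate from $M_m\ge 1$. The odd step, after substituting $M_{2m+1} = 2M_mM_{m+1} - (1+\delta)$, reduces to
\[
(8m^2+8m+2)\,M_mM_{m+1} \ge (8m^2+8m+1)(1+\delta) + 1,
\]
which follows from $M_mM_{m+1}\ge M_1 = 1+\delta$, itself a direct consequence of the monotonicity of $T_n(x)$ in $n$ for $x\ge 1$ (proved by induction using the Chebyshev recurrence).

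The main obstacle I expect is the arithmetic sharpness of the odd step: at $\delta = 0$ the target bound is essentially tight (one computes $c_n = 2n^2 - 1$), so the induction admits no slack. This is why the symmetric Leibniz decomposition must be used rather than the one-sided formula $\mathcal D_{fg} = G^f_{\mathrm{row}}\circ \mathcal D_g + G^g_{\mathrm{col}}\circ \mathcal D_f$, which would cost a factor of two that the final constant cannot absorb. Similarly, the crude estimate $M_mM_{m+1}\ge 1$ is insufficient to kill the additive $+1$ inherited from the $-x$ term in $T_{2m+1}=2T_{m+1}T_m-x$; only the slightly stronger $M_mM_{m+1}\ge 1+\delta$ makes the inequality close, and establishing this monotonicity is the one auxiliary fact one needs beyond the Chebyshev identities themselves.
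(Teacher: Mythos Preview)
Your proof is correct and follows essentially the same strong-induction route as the paper: both derive the recursive estimates $c_{2m}\le 4M_m c_m$ and $c_{2m+1}\le 2M_{m+1}c_m + 2M_m c_{m+1}+1$ from the Chebyshev product identities and then close the induction using $2M_m^2 = M_{2m}+1$, $2M_mM_{m+1}=M_{2m+1}+(1+\delta)$, and $M_{2m+1}\ge 1+\delta$. One small correction to your commentary (not to the proof itself): the one-sided Leibniz formula $\mathcal D_{fg}=G^f_{\mathrm{row}}\circ\mathcal D_g + G^g_{\mathrm{col}}\circ\mathcal D_f$ does \emph{not} lose a factor of two, since the rank-one row- or column-constant matrices already have $\gamma_2\le \max_i|f(\lambda_i)|$; it therefore yields exactly the same bound as your symmetric version, and in fact the paper uses precisely this one-sided decomposition in its odd step.
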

\begin{proof}
    We prove the lemma by induction on $n$.
    For $n = 1$ it can be easily shown by setting $T_1(x) = x$.
    
    For an even number $2n$ we can write:
    \begin{align}
        \left(\mathcal D_{T_{2n},\boldsymbol{\lambda}}\right)_{ij} &=\frac{T_{2n}(\lambda_i)-T_{2n}(\lambda_j)}{\lambda_i-\lambda_j} \\
        &= 2\frac{T_{n}^2(\lambda_i)-T_{n}^2(\lambda_j)}{\lambda_i-\lambda_j} \\
        &= 2\frac{\left(T_{n}(\lambda_i)-T_{n}(\lambda_j)\right)\left(T_{n}(\lambda_i)+T_{n}(\lambda_j)\right)}{\lambda_i-\lambda_j} \\
        &= 2\left(\mathcal D_{T_n,\boldsymbol{\lambda}}\right)_{ij}\left(T_{n}(\lambda_i)+T_{n}(\lambda_j)\right) \\
        &= 2\left(\mathcal D_{T_n,\boldsymbol{\lambda}}\right)_{ij}\left(R_{T_{n},\boldsymbol{\lambda}} + R^{\intercal}_{T_{n},\boldsymbol{\lambda}}\right)_{ij} \, ,
    \end{align}
    where $\big(\mathcal R_{T_n, \boldsymbol{\lambda}}\big)_{ij} = T_n(\lambda_i)$, and $R^{\intercal}_{T_n,\boldsymbol{\lambda}}$ is its transpose. Now we can write:
    \begin{equation}
        \begin{aligned}
            \gamma_2(\mathcal D_{T_{2n},\boldsymbol{\lambda}}) &\le 2\gamma_2(\mathcal D_{T_{n},\boldsymbol{\lambda}}) \gamma_2(R_{T_{n},\boldsymbol{\lambda}} + R^{\intercal}_{T_{n},\boldsymbol{\lambda}})\, , \\
            &\le 2\left( (2n^2-1)T_n(1 + \delta) \right) \left( \gamma_2(R_{T_{n},\boldsymbol{\lambda}}) + \gamma_2(R^{\intercal}_{T_{n},\boldsymbol{\lambda}}) \right)\, .
        \end{aligned}
    \end{equation}
Using $\left|T_n(x)\right| \le T_n(1+\delta)$ for $x \in [-1, 1+\delta]$, we can find that
    \begin{equation}
        \gamma_2(R_{T_{n},\boldsymbol{\lambda}}), \gamma_2(R^{\intercal}_{T_{n},\boldsymbol{\lambda}}) \le T_n(1+\delta)\, .
    \end{equation}
 As a result
    \begin{equation}
        \gamma_2(\mathcal D_{T_{2n},\boldsymbol{\lambda}})  \le 4(2n^2-1) T_n^2(1+\delta)\, .
    \end{equation}
    
    Now recall that $T_n^2(1+\delta) = \frac{T_{2n}(1+\delta) + 1}{2}$ and knowing that $T_{2n}(1+\delta) \ge 1$
    \begin{equation}
        \begin{aligned}
            \gamma_2(\mathcal D_{T_{2n},\boldsymbol{\lambda}})  &\le 2(2n^2-1) \left( T_{2n}(1+\delta) + 1 \right)\, , \\
             &\le 4(2n^2-1) T_{2n}(1+\delta)\, , \\
             &\le \left(2(2n)^2-1\right) T_{2n}(1+\delta)\, . \\
        \end{aligned}
    \end{equation}
    
    Similarly for an odd number $2n+1$, we can write:
    \begin{align}
        \left(\mathcal D_{T_{2n+1},\boldsymbol{\lambda}}\right)_{ij} &= 2\frac{T_n(\lambda_i)T_{n+1}(\lambda_i)-T_n(\lambda_j)T_{n+1}(\lambda_j)}{\lambda_i-\lambda_j}-1 \\
        &= 2\frac{T_n(\lambda_i)T_{n+1}(\lambda_i)-T_n(\lambda_j)T_{n+1}(\lambda_i)+T_n(\lambda_j)T_{n+1}(\lambda_i)-T_n(\lambda_j)T_{n+1}(\lambda_j)}{\lambda_i-\lambda_j}-1 \\
        &= 2\left(\left(\mathcal{D}_{T_n,\boldsymbol{\lambda}}\right)_{ij}T_{n+1}\left(\lambda_i\right) + T_{n}\left(\lambda_j\right)\left(\mathcal{D}_{T_{n+1},\boldsymbol{\lambda}}\right)_{ij}\right)-1 \\ 
        &= 2\left(\left(\mathcal{D}_{T_n,\boldsymbol{\lambda}}\right)_{ij}\left((R_{T_{n+1},\boldsymbol{\lambda}}\right)_{ij} +\left(R^{\intercal}_{T_{n},\boldsymbol{\lambda}}\right) \left(\mathcal{D}_{T_{n+1},\boldsymbol{\lambda}}\right)_{ij}\right)-1 \, .
    \end{align}
    Now by bounding $\gamma_2(R_{T_{n},\boldsymbol{\lambda}})$ and $\gamma_2(R_{T_{n+1},\boldsymbol{\lambda}})$ 
    \begin{equation}
        \begin{aligned}
            \gamma_2(\mathcal D_{T_{2n+1},\boldsymbol{\lambda}}) \le&~ 2\gamma_2(\mathcal D_{T_{n},\boldsymbol{\lambda}}) \gamma_2\left((R_{T_{n+1},\boldsymbol{\lambda}}\right) +
            2\gamma_2(\mathcal D_{T_{n+1},\boldsymbol{\lambda}})\gamma_2\left(R^{\intercal}_{T_{n},\boldsymbol{\lambda}}\right)
            + 1 \, , \\
            \le&~ 2(2n^2-1)T_n(1 + \delta) \gamma_2(R_{T_{n+1},\boldsymbol{\lambda}}) + \\
               &~ 2\left(2(n+1)^2-1\right)T_{n+1}(1 + \delta)\gamma_2(R^{\intercal}_{T_{n},\boldsymbol{\lambda}})
            + 1 \, , \\
             \le&~ 2(2n^2-1)T_n(1 + \delta) T_{n+1}(1+\delta) + \\
                &~ 2\left(2(n+1)^2-1\right)T_{n+1}(1 + \delta) T_n(1+\delta) + 1 \, , \\
             \le&~ 2\left((2n+1)^2-1\right)  T_{n}(1+\delta)T_{n+1}(1+\delta) + 1 \, .
        \end{aligned}
    \end{equation}
    Next using $2T_n(1+\delta) T_{n+1}(1+\delta) = T_{2n+1}(1+\delta) + 1 + \delta$ and the fact that $T_{2n+1}(1+\delta) \ge 1+\delta$ we obtain
    \begin{equation}
        \begin{aligned}
            \gamma_2(\mathcal D_{T_{2n+1},\boldsymbol{\lambda}}) 
             &\le \left((2n+1)^2-1\right)  \left(T_{2n+1}(1+\delta) + 1 + \delta \right) + 1 \, , \\
             &\le \left(2(2n+1)^2-1\right)  T_{2n+1}(1+\delta) \, .
        \end{aligned}
    \end{equation}
\end{proof}

We can now state the main lemma that will be used in the proof of \cref{thm: expansion-like}.

\begin{lem}\label{lem: GammaNormCm}
For any $n$ define the polynomial $C_n(x)$ by 
\begin{equation}
    C_n(x) := 1 - \frac{T_n(f(x))}{T_n(f(0))}\, ,
\end{equation}
where 
\begin{equation}
    f(x) = \frac{1+\epsilon - 2x}{1 - \epsilon}\,,
\end{equation}
and $0 \le \epsilon \le 1/3$, then for any sequence $\boldsymbol{\lambda}=(\lambda_1, \dots, \lambda_d)$ with $0\leq \lambda_i\leq 1$ we have
\begin{align}
    \gamma_2(\mathcal D_{C_n,\boldsymbol{\lambda}}) &\le 6n^2 - 3
\end{align}
\end{lem}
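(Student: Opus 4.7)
The plan is to reduce this bound to the previous Chebyshev-polynomial estimate (the lemma giving $\gamma_2(\mathcal D_{T_n,\boldsymbol{\lambda}}) \le (2n^2-1)T_n(1+\delta)$) by exploiting the fact that $f$ is an affine map.

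First I would write
\begin{equation}
C_n(\lambda_i) - C_n(\lambda_j) = -\frac{1}{T_n(f(0))}\bigl(T_n(f(\lambda_i)) - T_n(f(\lambda_j))\bigr),
\end{equation}
and use the affinity of $f$ (its slope is $-2/(1-\epsilon)$) to split the difference quotient as
\begin{equation}
\frac{C_n(\lambda_i) - C_n(\lambda_j)}{\lambda_i - \lambda_j}
= \frac{2}{(1-\epsilon)\, T_n(f(0))} \cdot \frac{T_n(f(\lambda_i)) - T_n(f(\lambda_j))}{f(\lambda_i)-f(\lambda_j)},
\end{equation}
with the appropriate chain-rule value on the diagonal. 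The point is that this identity expresses $\mathcal D_{C_n,\boldsymbol{\lambda}}$ as a scalar multiple of $\mathcal D_{T_n,f(\boldsymbol{\lambda})}$, namely
\begin{equation}
\mathcal D_{C_n,\boldsymbol{\lambda}} = \frac{2}{(1-\epsilon)\, T_n(f(0))}\, \mathcal D_{T_n,f(\boldsymbol{\lambda})}.
\end{equation}
Since $\gamma_2$ is a norm, the scalar pulls out absolutely.

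Next I would set $\delta := \tfrac{2\epsilon}{1-\epsilon}$, so that $f(0) = 1+\delta$, and check the range of $f$ on $[0,1]$: since $f$ is affine with $f(0)=1+\delta$ and $f(1)=-1$, for any $\lambda_i\in[0,1]$ we have $f(\lambda_i)\in[-1,1+\delta]$. This is exactly the hypothesis of \Cref{lem: gamma_2 inequality} applied to the shifted sequence $f(\boldsymbol{\lambda})$, which yields
\begin{equation}
\gamma_2(\mathcal D_{T_n,f(\boldsymbol{\lambda})}) \le (2n^2-1)\, T_n(1+\delta) = (2n^2-1)\, T_n(f(0)).
\end{equation}
Plugging this into the scalar-multiple identity causes $T_n(f(0))$ to cancel, leaving
\begin{equation}
\gamma_2(\mathcal D_{C_n,\boldsymbol{\lambda}}) \le \frac{2(2n^2-1)}{1-\epsilon}.
\end{equation}

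Finally I would invoke the hypothesis $\epsilon \le 1/3$, so that $1-\epsilon \ge 2/3$ and $2/(1-\epsilon)\le 3$, giving the claimed bound $\gamma_2(\mathcal D_{C_n,\boldsymbol{\lambda}}) \le 3(2n^2-1) = 6n^2-3$. I do not expect any serious obstacle here: the work was already done in establishing \Cref{lem: gamma_2 inequality} for $T_n$, and the only ingredients needed for this lemma are (i) the multiplicativity of $\gamma_2$ under scalar multiplication, (ii) the chain-rule identity made trivial by the affinity of $f$, and (iii) a straightforward range check. The mild point to watch is making sure that the eigenvalues of $f(\boldsymbol{\lambda})$ really lie in the interval where \Cref{lem: gamma_2 inequality} applies, which is guaranteed by the choice $\lambda_i\in[0,1]$ together with $\epsilon\le 1/3$.
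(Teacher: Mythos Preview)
Your proposal is correct and follows essentially the same route as the paper: both exploit the affinity of $f$ to reduce $\mathcal D_{C_n,\boldsymbol{\lambda}}$ to a scalar multiple of $\mathcal D_{T_n,f(\boldsymbol{\lambda})}$, then invoke \cref{lem: gamma_2 inequality} with $\delta=\tfrac{2\epsilon}{1-\epsilon}$ and use $\epsilon\le 1/3$ to bound the scalar factor by $3$. The paper phrases the chain rule as a Schur product with the constant matrix $\tfrac{-2}{1-\epsilon}\vb J$, which is exactly your scalar multiplication; your explicit range check $f([0,1])=[-1,1+\delta]$ is a slight improvement in presentation over the paper, which leaves that verification implicit.
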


\begin{proof}
    One can easily verify that
    \begin{equation}
        \begin{aligned}
            \mathcal D_{C_n,\boldsymbol{\lambda}} &= -\frac{1}{T_n(f(0))} \mathcal D_{T_n \circ f,\boldsymbol{\lambda}} \\
            &= -\frac{1}{T_n(f(0))} \mathcal D_{T_n,f(\boldsymbol{\lambda})} \circ \mathcal D_{f,\boldsymbol{\lambda}} \\
            &= -\frac{1}{T_n(f(0))} \mathcal D_{T_n,f(\boldsymbol{\lambda})} \circ (\frac{-2}{1 - \epsilon} \vb J) \\
        \end{aligned}
    \end{equation}
Where $\vb J$ is an all-one matrix.

Then we introduce $\delta := \frac{2\epsilon}{1-\epsilon}$ and we have  $f(0) = 1 + \delta$
\begin{equation}
\begin{aligned}
    \gamma_2(\mathcal D_{C_n,\boldsymbol{\lambda}}) &= -\frac{1}{T_n(1 + \delta)} \gamma_2(\mathcal D_{T_n,f(\boldsymbol{\lambda})}) (\frac{-2}{1 - \epsilon}) \\
    &\le \frac{1}{T_n(1 + \delta)} \gamma_2(\mathcal D_{T_n,f(\boldsymbol{\lambda})})(\frac{2}{1-\frac{1}{3}}) 
\end{aligned}
\end{equation}

Knowing $f(\lambda_i) \le 1+\delta$ and $0 \le \delta \le 1$, by applying \cref{lem: gamma_2 inequality} \begin{equation}
\begin{aligned}
    \gamma_2(\mathcal D_{C_n,\boldsymbol{\lambda}}) &\le \frac{3}{T_n(1 + \delta)} (2n^2 - 1)T_n(1 + \delta) \\
    \gamma_2(\mathcal D_{C_n,\boldsymbol{\lambda}}) &\le 6n^2 - 3
\end{aligned}
\end{equation}
\end{proof}

\section{A general isoperimetric inequality}
\label{app: proof-expansion-like}

In this appendix we prove \cref{thm: expansion-like}. This is a general isoperimetric inequality that will be used in the proof of \cref{thm: layers}.

\begin{proof}

Let $N_1$ be the set of qubits within distance $L$ from qubits $N_0=\{1, \dots, n_0\}$ in the graph, i.e.,
\begin{equation}N_1= N_0\cup \partial_L(N_0).\end{equation}
Observe that using~\eqref{eq:neighborhood-L-bound} we have $n_1=|N_1|\leq  \min\{2n_0 \Delta^{L}, n\}$. For simplicity of presentation we assume that $N_1=\{1,\dots, n_1\}$. 
Define
\begin{equation} 
\vb P_0 := \ketbra{s_1}\otimes \cdots \ketbra{s_{n_1}}\otimes\vb I^{\otimes (n-n_1)}, 
\end{equation}
and
\begin{equation} 
\vb \Gamma_0 := \frac{1}{n_1} \sum_{i =1}^{n_1} \ketbra{\bar s_i}_i, 
\end{equation}
where $\ket{\bar s_i}$ is the qubit state orthogonal to $\ket{s_i}$ and $\ketbra{\bar s_i}_i$ is the operator acting on the $i$-qubit by projecting on $\ket{\bar s_i}$. 

Let $T_m(x)$ for 
$m = \frac{1}{2} n_1^{\frac 12 - \theta}$
be the Chebyshev polynomial of degree $m$ given by \cref{eq:Chebyshev-poly}. Let
\begin{equation}
    f(x) = \frac{1+\frac{1}{n_1} - 2x}{1 - \frac{1}{n_1}},
\end{equation}
and define 
\begin{equation}
    C_m(x) := 1 - \frac{T_m(f(x))}{T_m(f(0))}.
\end{equation}
Next, define
\begin{equation} 
\vb K := \vb U C_m(\vb \Gamma_0) \vb U^\dagger = C_m\big(\vb U \vb \Gamma_0 \vb U^\dagger\big) \, ,
\end{equation}
where $\vb U=\vb U(T)$. We note that $C_m(0)=0$, and for $0\leq x\leq 1 $ we have 
\begin{equation}0\leq C_m(x)\leq 1+\frac{1}{T_m(f(0))}\leq 2.\end{equation}
Moreover, as shown in~\cite{Eldar2017} we have
\begin{equation}C_m(x)\geq 1-\frac{1}{1+2\frac{m^2}{n_1}}\geq \frac{m^2}{n_1}=\frac{1}{4} n_1^{-2\theta}, \qquad \frac{1}{n_1}\leq x\leq 1.\end{equation}
Note that the spectrum of $\vb \Gamma_0$, regardless of multiplicities, equals $\{0, \frac{1}{n_1}, \frac{2}{n_1}\dots, 1\}$. Thus, by the above properties of $C_m(x)$, the spectrum of $\vb K$ belongs to $\{0\}\cup [n_1^{-2\theta}/4,\, 2]$. Moreover, the projection on the eigenspace of $\vb K$ associated to the $0$-eigenvalue equals $\vb P=\vb U \vb P_0\vb U^{\dagger}$. Therefore, 
\begin{align}\label{eq:P-K-P}
\frac{1}{4} n_1^{-2\theta} (\vb I - \vb P) \leq \vb K\leq 2(\vb I- \vb P). 
\end{align}

The next step is to approximate $\vb K$ with a sum of local operators. Based on \cref{thm: lieb-robinson}, for $1\leq i\leq n_1$ let $\vb Q_i$ be an $L$-local operator such that 
\begin{equation}\big\|\vb Q_i -  \vb U \ketbra{\bar s_i}_i \vb U^\dagger \big\|\leq \epsilon(L),\end{equation}
where 
\begin{align}\label{eq:eps-L}
\epsilon(L) = \sqrt{\frac{2}{\pi}}\e^{-L\big(\log L-\log T - \log (4g(\Delta-1))\big)-\frac 12\log L}.
\end{align}
Indeed, $\vb Q_i$ acts on qubits within distance $L$ from $i$ in the graph. 
Then letting
\begin{equation}\tilde{\vb \Gamma}:= \frac{1}{n_1} \sum_{i=1}^{n_1} \vb Q_i,\end{equation}
we have  
\begin{equation}\big\| \tilde{\vb \Gamma} - \vb U\vb \Gamma_0 \vb U^\dagger\big\|\leq \epsilon(L).\end{equation}
We claim that $\tilde{\vb K}= C_m(\tilde{\vb \Gamma})$ is close to $\vb K= C_m\big( \vb U\vb \Gamma_0 \vb U^\dagger\big)$. To prove this we use \cref{lem: gamma-2-norm} and \cref{lem: GammaNormCm}. To apply the latter lemma we need to have a bound on the spectrum of $(1- r) \vb U \vb \Gamma_0 \vb U^{\dagger} +r\tilde{\vb \Gamma}$. Here, a crucial observation is that by construction of $\vb Q_i$ in the proof of \cref{thm: lieb-robinson} we have $0\preceq \vb Q_i\preceq \vb I$ which gives $0\preceq \tilde{\vb \Gamma}\preceq 1$.  On the other hand, by definition $0\preceq \vb \Gamma_0\preceq \vb I$ which yields $0\preceq (1- r) \vb U \vb \Gamma_0 \vb U^{\dagger} +r\tilde{\vb \Gamma}\preceq \vb I$ for $0\leq r\leq 1$. As a result, by \cref{lem: gamma-2-norm} and \cref{lem: GammaNormCm} we have
\begin{equation}\label{eq:tilde-K-approx}
    \big\|{\vb{\widetilde{K}}} - \vb K\big\| \leq (6m^2-3)  \epsilon(L)\leq 6m^2  \epsilon(L).
\end{equation}
Now recall that $\tilde{\vb \Gamma}$ is a sum of $L$-local operators, and $C_m(x)$ is a degree $m$ polynomial. As a result  $\tilde{\vb K}= C_m(\tilde{\vb \Gamma})$ is a sum of  $\ell'$-local operators where 
$$\ell'= mL.$$

Next, as in~\cite{Eldar2017} we partition $\{0,1\}^n$ into four sets:
\begin{align}
F_1 &:= \big(F\setminus \partial_{\ell'} (F))\times \{0,1\}^{n-n_0}, \\
F_2 &:= \big(F\cap \partial_{\ell'} (F))\times \{0,1\}^{n-n_0},  \\
F_3 &:= \big(F^c\cap \partial_{\ell'} (F))\times \{0,1\}^{n-n_0}, \\
F_4 &:= \big(F^c\setminus \partial_{\ell'} (F))\times \{0,1\}^{n-n_0}.
\end{align}
Here, $\partial_{\ell'}(F)$ is defined with respect to the Hamming distance on $\{0,1\}^{n_0}$. 
Then, expanding $\ket{\psi} = \ket{\psi_T}$ in the computational basis, we can write
\begin{equation}\ket{\psi} = \ket{\phi_1}+\ket{\phi_2}+\ket{\phi_3}+\ket{\phi_4},\end{equation}
where $\ket{\phi_j}$ is a linear combination of $\ket{\vb x}$'s with $\vb x\in F_j$. The point of this decomposition is that $p(\partial_{\ell'} (F)) = \|\phi_2\|^2+\|\phi_3\|^2$. On the other hand, 
since $\vb{\widetilde{K}}$ is $\ell'$-local we have 
\begin{align}\label{eq:tilde-K-phi-13}
\mel{\phi_1}{\vb{\widetilde{K}}}{\phi_3} = \mel{\phi_1}{\vb{\widetilde{K}}}{\phi_4} =
\mel{\phi_2}{\vb{\widetilde{K}}}{\phi_4} = 0.
\end{align}
Therefore, because $\vb \Gamma_0\ket{\psi_0}=0$, we have 
\begin{equation}
     \ev{\vb K}{\psi} = \bra{\psi} \vb UC_m(\vb \Gamma_0) \vb U^\dagger \ket{\psi} =  \bra{\psi_0} C_m(\vb \Gamma_0)  \ket{\psi_0}=0 .
\end{equation}
Thus by \cref{eq:tilde-K-approx} we have 
\begin{align}\label{eq:tilde-K-psi}
\big|\ev{\tilde {\vb K}}{\psi}\big|\leq 6m^2  \epsilon(L). 
\end{align}

Let $ \ket{\psi'} := - \ket{\phi_1} - \ket{\phi_2} + \ket{\phi_3} + \ket{\phi_4}$. Then using \cref{eq:tilde-K-phi-13} we have
\begin{align}
\bra{\psi'}\tilde{\vb K}\ket{\psi'} & = \sum_{(i, j)\in \{1, 2\}^2\cup \{3, 4\}^2} \bra{\phi_i}\tilde{\vb K}\ket{\phi_j} - \sum_{(i, j)\in \{1, 2\}\times \{3, 4\}}   \bra{\phi_i}\tilde{\vb K}\ket{\phi_j}\\
& = \|\phi_1\|^2 + \|\phi_2\|^2 + \|\phi_3\|^2 + \|\phi_4\|^2 + 2\Re \bra{\phi_1}\tilde{\vb K}\ket{\phi_2} + 2\Re \bra{\phi_3}\tilde{\vb K}\ket{\phi_4} - 2\Re \bra{\phi_2}\tilde{\vb K}\ket{\phi_3}.
\end{align}
Considering the same expansion for $\bra{\psi}\tilde{\vb K}\ket{\psi}$ and using \cref{eq:tilde-K-psi}, we obtain
\begin{equation} 
\label{eq:k_upperbound}
\begin{aligned}
\ev{\tilde{\vb K}}{\psi'} &\le 4 \big|\mel{\phi_{2}}{\tilde{\vb K}}{\phi_{3}}\big|  +  6m^2  \epsilon(L)\\
&\le 4\big\|\vb{\widetilde{K}}\big\|\cdot \norm{\ket{\phi_2}}\cdot \norm{\ket{\phi_3}}
+ 6m^2  \epsilon(L) \\
&\le 4 \left( \norm{\ket{\phi_2}}^2 + \norm{\ket{\phi_3}}^2 \right)
+ 6m^2 \epsilon(L) \\
&= 4 p(\partial_{\ell'} (F))
+ 6m^2  \epsilon(L),
\end{aligned}
\end{equation}
where in the third line we use $0\preceq \tilde{\vb \Gamma}\preceq \vb I$ which gives $\|\tilde{\vb K}\| = \|C_m(\tilde{\vb \Gamma})\|\leq 2$.

We now prove a lower bound on $\ev{\tilde{\vb K}}{\psi'}$ in terms of $p(F)$. To this end we will use \cref{eq:P-K-P}, so we first prove an upper bound on $\ev{\vb P}{\psi'}$. Define
\begin{equation}\vb R_0:= \sum_{\vb x\in F^c} \ketbra{\vb x}\otimes \vb I  -\sum_{\vb x\in  F} \ketbra{\vb x}\otimes \vb I.\end{equation}
Note that $\vb R_0\ket{\psi}= \ket{\psi'}$ and that $\vb R_0$ is an $n_0$-local operator acting on the first $n_0$ qubits. Then by \cref{thm: lieb-robinson} there is $\tilde {\vb R}$ acting on qubits in $N_1$ such that 
\begin{equation}\big\| \tilde{\vb R} - \vb R_0    \big\|\leq n_0\epsilon(L).\end{equation}
Then we have
\begin{align}
\vb P_0 \vb U^{\dagger}\ket{\psi'} & = \vb P_0 \vb U^\dagger \vb R_0\ket{\psi} \\
& = \vb P_0 \vb U^\dagger \vb R_0\vb U\ket{\psi_0}\\
& = \vb P_0 \tilde{\vb R}\ket{\psi_0} + \ket{\delta},
\end{align}
where $\ket{\delta}$ is a vector with $\|\ket{\delta}\| \leq n_0\epsilon(L)$. On the other hand, since $ \tilde{\vb R}$ acts only on the first $n_1$ qubits and $\ket{\psi_0} = \ket{s_0}\cdots \ket{s_n}$, the vector $\tilde{\vb R}\ket{\psi_0}$ is a linear combination of vectors of the form $\ket{\vb x}\otimes \ket{s_{n_1+1}}\cdots \ket{s_n}$ for $\vb x\in \{0,1\}^{n_1}$. Therefore, applying the projection $\vb P_0$ on $\tilde{\vb R}\ket{\psi_0}$ we get a vector parallel to $\ket{\psi_0}$, i.e.,
$\vb P_0\tilde{\vb R}\ket{\psi_0} = \alpha \ket{\psi_0}$ for some $\alpha\in \mathbb C$. This means that
\begin{align}
\vb P_0 \vb U^{\dagger}\ket{\psi'} = \alpha\ket{\psi_0} + \ket{\delta}.
\end{align}
To compute $\alpha$ we write
\begin{align}
\alpha &= \bra{\psi_0}\vb P_0\vb U^\dagger\ket{\psi'} - \langle \psi_0|\delta\rangle \\
&= \bra{\psi_0}\vb P_0\vb U^\dagger \vb R \vb U  \ket{\psi_0} - \langle \psi_0|\delta\rangle\\
&= \bra{\psi_0} \vb U^\dagger \vb R \vb U  \ket{\psi_0} - \langle \psi_0|\delta\rangle\\
&= \bra{\psi}  \vb R \ket{\psi} - \langle \psi_0|\delta\rangle\\
&= \bra{\psi} \psi'\rangle - \langle \psi_0|\delta\rangle\\
& = -\|\ket{\phi_1}\|^2 -\|\ket{\phi_2}\|^2 +\|\ket{\phi_3}\|^2 +\|\ket{\phi_4}\|^2 - \langle \psi_0|\delta\rangle\\
& = (1-2p(F)) - \langle \psi_0|\delta\rangle,
\end{align}
where in the last line we have used $p(F)=\|\ket{\phi_1}\|^2 +\|\ket{\phi_2}\|^2$. Putting these together we arrive at 
\begin{equation}\vb P_0 \vb U^{\dagger}\ket{\psi'}  = \big((1-2p(F)) - \langle \psi_0|\delta\rangle\big)\ket{\psi_0} + \ket{\delta}.\end{equation}
Therefore,
\begin{align}
\bra{\psi'} \vb P\ket{\psi'} & = \bra{\psi'}\vb U\vb P_0\vb U^\dagger\ket{\psi'}\\
& = \big\|\vb P_0 \vb U^{\dagger}\ket{\psi'} \big\|^2\\
&\leq \big|(1-2p(F)) - \langle \psi_0|\delta\rangle\big|^2 + \|\ket{\delta}\|^2 + 2\big|(1-2p(F)) - \langle \psi_0|\delta\rangle\big|\cdot \|\ket{\delta}\|\\
& \leq (1-2p(F))^2 + 8\|\ket{\delta}\|\\
& \leq (1-2p(F))^2  + 8 n_0\epsilon(L).
\end{align}
Next, using \cref{eq:P-K-P} we find that 
\begin{align}
\bra{\psi'}\vb K\ket{\psi'} &\geq \frac{1}{4}n_1^{-2\theta} \big(1 - \bra{\psi'}\vb P\ket{\psi'}\big)\\
&\geq  \frac{1}{4}n_1^{-2\theta} \Big(1 - (1-2p(F))^2  - 8 n_0\epsilon(L) \Big)\\
& \geq \frac{1}{2} n_1^{-2\theta} p(F)  - 2n_1^{-2\theta} n_0\epsilon(L),
\end{align}
where in the last line we used $p(F)\leq 1/2$ to conclude that $1-(1-2p(F))^2\ge 2p(F)$.
Then by \cref{eq:tilde-K-approx} we have
\begin{align}
\big|\bra{\psi'}\tilde{\vb K}\ket{\psi'}\big| \geq \frac{1}{2} n_1^{-2\theta} p(F)  - \big(2n_1^{-2\theta}n_0 + 6m^2\big) \epsilon(L).
\end{align}
Finally, comparing this inequality with \cref{eq:k_upperbound} yields
\begin{align}
4 p(\partial_{\ell'} (F))
+ 6m^2  \epsilon(L)\geq \frac{1}{2} n_1^{-2\theta} p(F)  - \big(2n_1^{-2\theta}n_0 + 6m^2\big) \epsilon(L),
\end{align}
or equivalently 
\begin{align}\label{eq:isoper-bound-ell-prime}
p(\partial_{\ell'} (F))
\geq \frac{1}{8} n_1^{-2\theta} p(F)  - \Big(\frac{1}{2}n_1^{-2\theta}n_0 + 3m^2\Big) \epsilon(L).
\end{align}

Now for the first part of the theorem
let $L=\frac{k_1}{\log \Delta}\log n_0 =\beta 4g\Delta T $ with
\begin{equation}
\beta = \frac{\kappa_1}{4g\Delta\log (\Delta)  T}\log n_0\geq \Delta^{\frac{1+\kappa_1+\kappa_2}{\kappa_1}},
\end{equation}
where the inequality follows from the assumption 
\begin{equation}T\leq \frac{\kappa_1}{4g\Delta\log (\Delta)\Delta^{\frac{1+2\kappa_1+\kappa_2}{\kappa_1}}}\log n_0.\end{equation}
Then we have $n_1\leq2 n_0\Delta^L=2n_0^{1+\kappa_1}$ and  $m=\frac{1}{2}n_1^{\frac 12-\theta} \leq \frac{1}{\sqrt 2}n_0^{\frac {1}{2}(1+\kappa_1)}$. We also have
\begin{align}
\epsilon(L) &=\sqrt{\frac{2}{\pi}}\e^{-L\big(\log L-\log T - \log (4g(\Delta-1))\big)-\frac 12 \log L}\\
&\leq \e^{-L\log \beta}\\
& =  n_0^{-\frac{\kappa_1}{\log (\Delta)}\log \beta}\\
&\leq n_0^{- (1+\kappa_1+\kappa_2)}.
\end{align}
Therefore,
\begin{align}
 \Big(\frac{1}{2}n_1^{-2\theta}n_0 + 3m^2\Big) \epsilon(L) & \leq  2 \,n_0^{1+\kappa_1}n_0^{- (1+\kappa_1+\kappa_2)}= 2n_0^{-\kappa_2}.
 \end{align}
On the other hand,
\begin{equation}
\ell' = mL\leq \frac{1}{2}(2n_0^{1+\kappa_1})^{\frac 12-\theta}\frac{\kappa_1}{\log (\Delta)}\log n_0\leq \ell.\end{equation}
We conclude that
\begin{align}
p(\partial_{\ell} (F))\geq p(\partial_{\ell'} (F))
\geq \frac{1}{8} n_1^{-2\theta} p(F)  - 2n_0^{-\kappa_2} \geq  \frac{1}{8} (2n_0^{1+\kappa_1})^{-2\theta} p(F)  -2 n_0^{-\kappa_2}.
\end{align}

For the second part of the theorem, note that when $n_0=n$, then $n_1=n$. Using this in~\eqref{eq:isoper-bound-ell-prime} for $\theta=0$, we obtain
\begin{equation}p(\partial_{\ell'})\geq \frac 18 p(F) - \frac 54 n\epsilon(L).\end{equation}
Let $L=\kappa_1(1+\kappa_2)\log n$ and note that $\ell'= mL = \frac 12 \sqrt n L=\ell$.
On the other hand, using the given bound  on $T$ we find that $\epsilon(L)\leq n^{-\kappa_1(1+\kappa_2)\log(1+\kappa_2)}$. Putting this in the above inequality yields the desired result.

\end{proof}

\section{Proof Theorem~\ref{thm: layers}}\label{app:layers}

For an integer $0\leq d< D/\ell$ define  $K_d\subseteq \{0,1\}^n$ by
    \begin{equation}
        K_d = \{\vb x \,:\, (d-1)\ell< \dist_H(\vb x, F_1)\leq d\ell \}\,,
    \end{equation}
where $\dist_H(\vb x, F_1)=\min_{\vb y \in F_1} \dist_H(\vb x, \vb y)$. 
Clearly, $K_0 = F_1$ and $K_d \cap F_2 = \emptyset$ for all  $d < D/\ell$.
Then, since $K_d$'s are disjoint, we have
    \begin{equation}
        \sum_{\stackrel{1\leq d< \frac D\ell-1}{d:\text{odd}}} p(K_d)+p(K_{d+1}) \leq 1 - \big(p(F_1)+p(F_2)\big)\leq 1- 2\mu \, .
    \end{equation}
Therefore, there exists $d_0$ such that 
    \begin{equation} 
    p(K_{d_0})+p(K_{d_0+1}) \leq \frac{1 - 2\mu}{ \frac{D}{2\ell} -2}\, .
    \label{Eq:upperbound_for_boundary}
    \end{equation}
Now define
    \begin{align}
        \widetilde F_1 &:= \bigcup_{j \leq d_0} K_d\, .
    \end{align}
Observe that by definition $F_1\subseteq \widetilde F_1$ and that $\partial_\ell (\widetilde F_1)\subseteq K_{d_0}\cup K_{d_0+1}$. Therefore, using \cref{thm: expansion-like} we obtain
\begin{align}    
\frac{1 - 2\mu}{ \frac{D}{2\ell} -2}&\geq  p(K_{d_0})+p(K_{d_0+1})\\
& \geq  p\big(\partial_\ell \widetilde F_1\big)\\
&\geq  \frac{1}{8}  p(\widetilde F_1)  - \frac 54 n^{-\big(\kappa_1(1+\kappa_2)\log(1+\kappa_2)-1\big)}\\
&\geq  \frac{\mu}{8}    - \frac 54 n^{-\big(\kappa_1(1+\kappa_2)\log(1+\kappa_2)-1\big)},
\end{align}
and
\begin{align}
    \frac{D}{2\ell}-2&\leq \frac{1-2\mu}{\mu/8} + \frac 54 n^{-\big(\kappa_1(1+\kappa_2)\log(1+\kappa_2)-1\big)}\Big(\frac{\frac{D}{2\ell}-2}{\mu/8}\Big)\\
    &< \frac{8}{\mu} -2 + \frac 54 n^{-\big(\kappa_1(1+\kappa_2)\log(1+\kappa_2)-1\big)}\Big(\frac{\frac{D}{2\ell}}{\mu/8}\Big).
\end{align}
Multiplying both sides by $2\ell$ yields the desired inequality.

\section{Proof of Theorem~\ref{thm:MaxCut-Ramanujan}}\label{app:MaxCut-Ramanujan}

We follow similar steps as in~\cite{Bravyi2019} to prove this theorem. Let $\vb x\in \{0,1\}^n$ be the measurement outcome on $\ket{\psi_T}$ in the computational basis. Let $\Cut(\vb x)$ be the size of the cut associated to $\vb x$:
\begin{equation}\Cut(\vb x) = -\bra{\vb x} \vb C\ket {\vb x}.\end{equation}
Then we have
\begin{equation}-\bra{\psi_T}\vb C\ket{\psi_T} = \mathbb E[\Cut(\vb x)].\end{equation}
Here a crucial observation is that $\Cut(\vb x) = \Cut(\bar{\vb x})$ where $\bar{\vb x}\in \{0,1\}^n$ is obtained from $\vb x$ by flipping each coordinate: $\bar{\vb x}_i=1-\vb x_i$. On the other hand, the distribution of $\bar {\vb x}$ is the same as the distribution of $\vb x$. This is because $\vb X^{\otimes n}\ket{\psi_T} = \ket{\psi_T}$  which itself can be proven using the fact that the starting state of QA satisfies $\vb X^{\otimes n}\ket{\psi_0}= \ket{\psi_0}$ and that $\vb X^{\otimes n}$ commutes with $\vb H(t)$ for any $t$. 

Let $\vb x^*$ be the configuration associated to the bipartition of  $G$. Then we have $\Cut^*=\Cut(\vb x^*) = \Cut(\bar {\vb x}^*) = |E| = \Delta n/2$. Let $d=\alpha n$ and define
\begin{equation} 
F_1 =  \{ \vb x \,:\, \dist_H(\vb x, \vb x^*)\leq d \},
\end{equation}
where as before, $\dist_H(\vb x, \vb x^*)$ denotes the hamming distance between $\vb x$ and $\vb x^*$. Similarly, define
\begin{equation} 
F_2 =  \{ \vb x\,:\, \dist_H(\vb x,  \bar{\vb x}^*)\leq d \} = \{\vb x\,:\, \bar{\vb x}\in F_1\}.
\end{equation}
Now using \cref{app:layers} and the fact that $D := \dist_H(F_1, F_2) \geq n - 2d $ we have
\begin{align}
p(F_1)=p(F_2)=\min\{p(F_1), p(F_2)\} 
\leq \frac{16 \ell}{D} + 10 n^{-\kappa_3},
\end{align}
where the first equality follows from the aforementioned symmetry and $\ell$ is given by \cref{eq:ell:n_0=n} for some $\kappa_2>0$ and 
\begin{equation}\kappa_3=\kappa_1(1+\kappa_2)\log(1+\kappa_2)-1.\end{equation}

Suppose that $\vb x$ is not in $F_1\cup F_2$. This means that either $d<\dist_H(\vb x, \vb x^*)\leq n/2$ or $d<\dist_H(\vb x, \bar{\vb x}^*)\leq n/2$. Then, e.g., in the first case, using the definition of the Cheeger constant we have
\begin{equation}
    \Cut(\vb x) \leq  |E| - h(G)d < \Cut(\vb x^*) .
\end{equation}
This means that, letting
\begin{equation} J = \{ \vb x \,:\, \Cut(\vb x) \geq |E| - h(G)d\},\end{equation}
we have $J \subseteq F_1 \cup F_2$ and
\begin{equation} 
p(J) \leq p(F_1) + p(F_2) \le 2\Big(\frac{16 \ell}{D} + 10n^{-\kappa_3}\Big).\end{equation} 

Therefore,
\begin{equation} p\Big(|E| - \Cut(\vb x) > h(G)d\Big) \geq 1 - \frac{32 \ell}{D} - 20n^{-\kappa_3}. \end{equation}
Next,
using Markov's inequality we have
\begin{equation} \mathbb{E}\big[|E| - \Cut(\vb x)\big] \geq h(G)d \left(
1 - \frac{32 \ell}{D} - 20n^{-\kappa_3}
\right).\end{equation}
Now, note that $D\geq n-2d=(1-2\alpha)n$ and that for \emph{sufficiently large} $n$
we have 
\begin{equation}1 - \frac{32 \ell}{D} - 20n^{-\kappa_3}\geq 1-\epsilon.\end{equation}

Then, for any such $n$ we have
\begin{equation} \mathbb{E}[\Cut(\vb x)] \leq |E| - \alpha (1-\epsilon)h(G) n.\end{equation}
Thus, using $|E| = \frac{n\Delta}{2}$ and $h(G) \ge \frac{\Delta - 2\sqrt{\Delta - 1}}{2}$
we obtain
\begin{equation}\begin{aligned} \frac{\mathbb{E}[\Cut(\vb x)]}{\Cut^*} &\leq  1 - \alpha(1-\epsilon)\Big( \frac{n\Delta}{2|E|} - \frac{n\sqrt{\Delta - 1}}{|E|}\Big) \\ &= 1-\alpha(1-\epsilon) +2\alpha(1-\epsilon) \frac{\sqrt{\Delta - 1}}{\Delta}.\end{aligned}\end{equation}

\section{Proof of Theorem~\ref{thm:random-bipartite}}\label{app:random-bipartite}

Let $\mathbb{G}_\Delta(n)$ be a random $\Delta$-regular graph on $n$ vertices and $\mathbb{G}_\Delta^B(n)$ be a bipartite random $\Delta$-regular graph on $n$ vertices. 
Let $L$ be a parameter 
that grows at most logarithmically with $n$. 
Then, it is known~\cite{McKay2004} that there exists a constant $\alpha_\Delta$ independent of $n$ and $L$ such that 
\begin{align}
    \mathbb{E}_{\mathbb{G}_{\Delta} (n)}\left[\# \text{ of cycles of length at most } 2L+1\right] &\le \alpha_\Delta .\Delta^{2L+1}\, ,\\
    \mathbb{E}_{\mathbb{G}^B_{\Delta} (n)}\left[\# \text{ of cycles of length at most } 2L+1\right] &\le \alpha_\Delta .\Delta^{2L+1}\, .
\end{align}
Given a graph $G=(V, E)$ let $R_L(G)$ be the set of edges in $G$ that have a cycle in their $L$-neighborhood:
\begin{equation*}
    R_L(G) = \left\{e\in E \, |\, \exists \text{ a cycle in the } L\text{-neighborhood of } e \right\}.
\end{equation*}
It is easy to verify that if $e$ belong to $R_L(G)$, then there is cycle of length at most $2L+1$ in the $L$-neighborhood of $e$. On the other hand, for any such cycle, 
the number of edges in its $L$-neighborhood is at most:
\begin{equation}
    (2L+1)\Big[1+\Delta+\Delta(\Delta-1)+\cdots \Delta (\Delta-1)^{L-1}\Big]\le 2(2L+1)\Delta^{L}\, .
\end{equation}
Therefore, by the above bound on the expectation of the number of cycles of length at most $2L+1$ we have
\begin{align}
    \mathbb{E}_{\mathbb{G}_\Delta (n)}\left[\abs{R_L(\mathbb{G}_\Delta (n))}\right] &\le 2\alpha_\Delta (2L+1)\Delta^{3L+1}\leq \alpha_\Delta n^{\kappa}\, , \\
    \mathbb{E}_{\mathbb{G}_\Delta^B (n)}\left[\abs{R_L(\mathbb{G}_\Delta^B (n))}\right] &\le 2\alpha_\Delta (2L+1)\Delta^{3L+1}\leq \alpha_\Delta n^{\kappa} \, .
\end{align}

Let 
\begin{equation}L=\frac{\kappa}{4\log\Delta}\log n\, .\end{equation}
For an edge $e=\{i,j\}$ in $G$ define
\begin{equation}
     \mathcal{E}(e, L) =  \ev**{ \vb V_e(T)^\dagger \vb C_e \vb V_e(T) }{\psi_0}\, ,
\end{equation}
where 
\begin{equation}\vb C_e=\frac 12 (\vb I-\vb Z_i\vb Z_j)\, ,\end{equation}
and
$\vb V_e(T)$ is the unitary evolution associated to the Hamiltonian that includes the terms of $\vb H(t)$ that correspond to edges/vertices at distance at most $L$ from $e$. Note that by \cref{thm: lieb-robinson} we have
    \begin{align}
    \norm{\vb{\Upsilon}} \le \epsilon(L)= \sqrt{\frac{2}{\pi}} \e^{-L\big(\log L - \log T - \log (4\Delta)\big) - \frac12 \log L}\, ,
\end{align}
where
\begin{align}
    \vb{\Upsilon}_e =\vb U(T)^\dagger \mathbf C_e \vb U(T) - \vb V_e(T)^\dagger \mathbf C_e \vb V_e(T)\, .
    \end{align}
On the other hand,  observe that the $L$-neighborhood of any edge $e$ not in $R_L(G)$ is a tree, and that this is a fixed tree. Then, since $\vb V_e(T)$ depends only on the structure of the graph around $e$, and $\ket{\psi_0}$ is symmetric, for any $e\notin R_L(G)$, the quantity 
\begin{equation}\mathcal E(e, L) = \mathcal E_{\text{tree}}(L)\, ,\end{equation}
is independent of $e$.
Thus, using the fact that $\norm{\mathbf C_e}\le 1$, we have
\begin{align}
    -\ev**{ \vb{C} }{\psi_T}&=-\ev**{\vb U(T)^\dagger \mathbf C \vb U(T)}{\psi_0}\\ &= \sum_{e\notin R_L(G)}\ev**{\vb U(T)^\dagger \mathbf C_e \vb U(T)}{\psi_0} \nonumber\\
    &\quad +\sum_{e\in R_L(G)}\ev**{\vb U(T)^\dagger \mathbf C_e \vb U(T)}{\psi_0} \\
    &\le \abs{E}\cdot \big(\mathcal{E}_{\text{tree}}(L)+ \epsilon(L)\big) + \abs{R_L(G)} \\
    &\le \frac{n\Delta}{2}\big(\mathcal E_{\text{tree}}(L)+\epsilon(L)\big)+\abs{R_L(G)}\, . 
    \end{align}
Therefore,   
    \begin{align}
     \mathbb E_{\mathbb G^B_\Delta (n)}&\big[-\ev**{ \mathbf  C }{\psi_T}\big] \nonumber\\
     &\le \frac{n\Delta}{2}\big(\mathcal E_{\text{tree}}(L)+\epsilon(L)\big)+2\alpha_\Delta (2L+1) \Delta^{3L+1}\, . \label{eq:QA-exp-maxcut-bipartite}
\end{align}
On the other hand, we also have
\begin{align}
    -\ev**{ \mathbf C }{\psi_T}  &= \sum_{e\notin R_L(G)}\ev**{\vb U(T)^\dagger \mathbf C_e \vb U(T)}{\psi_0} \nonumber\\
    &\quad +\sum_{e\in R_L(G)}\ev**{\vb U(T)^\dagger \mathbf C_e \vb U(T)}{\psi_0} \\
    &\ge \big(\abs{E}-\abs{R_L(G)}\big)\big(\mathcal{E}_{\text{tree}}(L)-\epsilon(L)\big)\, ,
    \end{align}
and    
    \begin{align}
\mathbb E_{\mathbb G_\Delta (n)}&\left[-\ev**{\mathbf C }{\psi_T}\right] \nonumber\\
        &\ge \left(\frac{n\Delta}{2}-2\alpha_\Delta (2L+1) \Delta^{3L+1}\right)\big(\mathcal E_{\text{tree}}(L)-\epsilon(L)\big)\, .\label{eq:QA-exp-maxcut-general}
\end{align}

It is known that for any $\Delta$, there exists a constant $\rho_\Delta$ such that
\begin{equation}
    \mathbb{E}_{\mathbb{G}_\Delta (n)}\left[\text{size of }\textsc{MaxCut}\right] = \rho_\Delta n + o(n)\, .
\end{equation}
Therefore, since $-\ev**{\mathbf C }{\psi_T}$ itself is an expectation of sizes of certain cuts, we have
\begin{equation}
    \mathbb E_{\mathbb G_\Delta (n)}\left[-\ev**{\mathbf C }{\psi_T}\right] \le \rho_\Delta n+o(n)\, .
\end{equation}
Comparing this with \cref{eq:QA-exp-maxcut-general} yields
\begin{equation}
    \Delta\left(\frac{n}{2}-2\alpha_\Delta (2L+1) \Delta^{3L}\right)\big(\mathcal E_{\text{tree}}(L)-\epsilon(L) \big)\le \rho_\Delta n + o(n)\, .
\end{equation}
Next, using the definition of $L$ and the bound on $T$, we find that for any $\nu>0$ and sufficiently large $n$
 we have
 \begin{equation}
    \frac \Delta 2 \mathcal E_{\text{tree}} \le \rho_\Delta+\nu \, ,
\end{equation}
with 
\begin{equation}
    \mathcal E_{\text{tree}} = \limsup_{n\to \infty}\mathcal E_{\text{tree}}(L)\, ,
\end{equation}
where we note that $L$ can grow with $n$.
Using this in~\eqref{eq:QA-exp-maxcut-bipartite} we find that for sufficiently large $n$
\begin{equation}
    \mathbb E_{\mathbb G^B_\Delta (n)}\big[-\ev**{ \mathbf  C }{\psi_T}\big] \le (\rho_\Delta+2\nu)n\, .
\end{equation}

It is known that $\rho_3 \le 1.4026<\frac 3 2$,  and that for large $\Delta$~\cite{Kardos2012, Coppersmith2004,Dembo2017}
\begin{equation}
    \rho_\Delta \le \frac \Delta 4 + O(\sqrt{\Delta})\, .
\end{equation}
Thus, for sufficiently large $n$, short time QA on a random bipartite graphs gives a cut of size of at most \begin{equation}\left(\frac{\Delta}{4}+O(\sqrt{\Delta})+2\nu \right)n\, ,\end{equation} which is far from the optimal value of $\frac {\Delta}{2}n$.


\begin{thebibliography}{22}


\bibitem{Kadowaki1998}
T. Kadowaki and H. Nishimori.
\newblock Quantum annealing in the transverse Ising model.
\newblock \href{https://dx.doi.org/10.1103/PhysRevE.58.5355}{Physical Review E
  {\bf 58}, 5355--5363} (1998).

\bibitem{Farhi2000}
E. Farhi, J. Goldstone, S. Gutmann and M. Sipser.
\newblock Quantum Computation by Adiabatic Evolution.
\newblock \href{https://doi.org/10.48550/arXiv.quant-ph/0001106}{arXiv:0001106 [quant-ph]}~(2000).


\bibitem{Kato1950}
T. Kato.
\newblock On the adiabatic theorem of quantum mechanics.
\newblock \href{https://dx.doi.org/10.1143/JPSJ.5.435}{Journal of the Physical
  Society of Japan {\bf 5}, 435--439}~(1950).

\bibitem{Born1928}
M.~Born and V.~Fock.
\newblock Beweis des adiabatensatzes.
\newblock \href{https://dx.doi.org/10.1007/BF01343193}{Zeitschrift f\"ur Physik
  {\bf 51}, 165--180}~(1928).

\bibitem{Albash2018}
T. Albash and D.~A. Lidar.
\newblock Adiabatic quantum computation.
\newblock \href{https://dx.doi.org/10.1103/RevModPhys.90.015002}{Reviews of
  Modern Physics {\bf 90}, 015002}~(2018).

\bibitem{hen2016}
I. Hen and F.~M. Spedalieri.
\newblock Quantum Annealing for Constrained Optimization.
\newblock \href{https://dx.doi.org/10.1103/PhysRevApplied.5.034007}{Physical
  Review Applied {\bf 5}, 034007}~(2016).


\bibitem{puri2017}
S. Puri, C.~K. Andersen, A.~L. Grimsmo, and A. Blais.
\newblock Quantum annealing with all-to-all connected nonlinear
  oscillators.
\newblock \href{https://dx.doi.org/10.1038/ncomms15785}{Nature Communications
  {\bf 8}, 15785}~(2017).

\bibitem{lechner2015}
W. Lechner, P. Hauke, and P. Zoller.
\newblock  A quantum annealing architecture with all-to-all connectivity from
  local interactions.
\newblock \href{https://dx.doi.org/10.1126/sciadv.1500838}{Science Advances
  {\bf 1}, e1500838}~(2015).

\bibitem{jiang2018}
S. Jiang, K.~A. Britt, A.~J. McCaskey, T.~S. Humble, and S. Kais.
\newblock Quantum Annealing for Prime Factorization.
\newblock \href{https://dx.doi.org/10.1038/s41598-018-36058-z}{Scientific
  Reports {\bf 8}, 17667}~(2018).

\bibitem{li2018}
R.~Y. Li, R. Di~Felice, R. Rohs, and D.~A. Lidar.
\newblock Quantum annealing versus classical machine learning applied to a
  simplified computational biology problem.
\newblock \href{https://doi.org/10.1038/s41534-018-0060-8}{NPJ quantum information {\bf 4}, 1--10}~(2018).


\bibitem{stella2005}
L. Stella, G.~E. Santoro, and E. Tosatti.
\newblock Optimization by quantum annealing: Lessons from simple cases.
\newblock \href{https://doi.org/10.1103/PhysRevB.72.014303}{Physical Review B {\bf 72}, 014303}~(2005).

\bibitem{titiloye2011}
O. Titiloye and A. Crispin.
\newblock Quantum annealing of the graph coloring problem.
\newblock \href{https://doi.org/10.1016/j.disopt.2010.12.001}{Discrete Optimization {\bf 8}, 376--384}~(2011).

\bibitem{mott2017}
A. Mott, J. Job, J.-R. Vlimant, D. Lidar, and M. Spiropulu.
\newblock Solving a Higgs optimization problem with quantum annealing for
  machine learning.
\newblock \href{https://doi.org/10.1038/nature24047}{Nature {\bf 550}, 375--379}~(2017).

\bibitem{pudenz2015}
K.~L. Pudenz, T. Albash, and D.~A Lidar.
\newblock Quantum annealing correction for random Ising problems.
\newblock \href{https://doi.org/10.1103/PhysRevA.91.042302}{Physical Review A {\bf 91}, 042302}~(2015).


\bibitem{perdomo2012}
A. Perdomo-Ortiz, N. Dickson, M. Drew-Brook, G. Rose, and
  A. Aspuru-Guzik.
\newblock Finding low-energy conformations of lattice protein models by
  quantum annealing.
\newblock \href{https://doi.org/10.1038/srep00571}{Scientific reports {\bf 2}, 571}~(2012).


\bibitem{pudenz2014}
K.~L. Pudenz, T. Albash, and D.~A Lidar.
\newblock Error-corrected quantum annealing with hundreds of qubits.
\newblock \href{https://doi.org/10.1038/ncomms4243}{Nature communications {\bf 5}, 1--10}~(2014).


\bibitem{martovnak2004}
R. Marto{\v{n}}{\'a}k, G.~E. Santoro, and E. Tosatti.
\newblock Quantum annealing of the traveling-salesman problem.
\newblock \href{https://doi.org/10.1103/PhysRevE.70.057701}{Physical Review E {\bf 70}, 057701}~(2004).


\bibitem{adachi2015}
S.~H. Adachi and M.~P. Henderson.
\newblock Application of quantum annealing to training of deep neural
  networks.
\newblock  \href{https://doi.org/10.48550/arXiv.1510.06356}{arXiv:1510.06356 [quant-ph]}~(2015).

\bibitem{johnson2011}
M.~W Johnson, et~al.
\newblock Quantum annealing with manufactured spins.
\newblock \href{https://doi.org/10.1038/nature10012}{Nature {\bf 473}, 194--198}~(2011).

\bibitem{boixo2013}
S. Boixo, T. Albash, F.~M. Spedalieri, N. Chancellor, and
  D.~A. Lidar.
\newblock Experimental signature of programmable quantum annealing.
\newblock \href{https://doi.org/10.1038/ncomms3067}{Nature communications {\bf 4}, 1--8}~(2013).


\bibitem{King2022}
A.~D. King, et~al.
\newblock Coherent quantum annealing in a programmable 2000-qubit Ising
  chain.
\newblock  \href{https://doi.org/10.48550/arXiv.2202.05847}{arXiv:2202.05847 [quant-ph]}~(2022).

\bibitem{Google}
B.~Foxen, et al.
\newblock Demonstrating a Continuous Set of Two-qubit Gates for Near-term
  Quantum Algorithms.
\newblock \href{https://dx.doi.org/10.1103/PhysRevLett.125.120504}{Physical
  Review Letters {\bf 125}, 120504}~(2020).

\bibitem{wright2019benchmarking}
K.~Wright, et~al.
\newblock Benchmarking an 11-qubit quantum computer.
\newblock \href{https://doi.org/10.1038/s41467-019-13534-2}{Nature communications {\bf 10}, 1--6}~(2019).


\bibitem{Crosson2021}
E.~J. Crosson and D.~A. Lidar.
\newblock Prospects for quantum enhancement with diabatic quantum
  annealing.
\newblock \href{https://dx.doi.org/10.1038/s42254-021-00313-6}{Nature Reviews
  Physics {\bf 3}, 466--489}~(2021).

\bibitem{Farhi2014}
E. Farhi, J. Goldstone, and S. Gutmann.
\newblock A Quantum Approximate Optimization Algorithm.
\newblock  \href{https://doi.org/10.48550/arXiv.1411.4028}{arXiv:1411.4028 [quant-ph]}~(2014).

\bibitem{Farhi2020a}
E. Farhi, D. Gamarnik, and S. Gutmann.
\newblock The Quantum Approximate Optimization Algorithm Needs to See the
  Whole Graph: Worst Case Examples.
\newblock  \href{https://doi.org/10.48550/arXiv.2005.08747}{arXiv:2005.08747 [quant-ph]}~(2020)

\bibitem{Farhi2020b}
E. Farhi, D. Gamarnik, and S. Gutmann.
\newblock The Quantum Approximate Optimization Algorithm Needs to See the
  Whole Graph: A Typical Case.
\newblock  \href{https://doi.org/10.48550/arXiv.2004.09002}{arXiv:2004.09002 [quant-ph]}~(2020).

\bibitem{Bravyi2019}
S. Bravyi, A. Kliesch, R. Koenig, and E. Tang.
\newblock Obstacles to Variational Quantum Optimization from Symmetry
  Protection.
\newblock \href{https://dx.doi.org/10.1103/PhysRevLett.125.260505}{Physical Review Letters {\bf 125}, 260505}~(2020).

\bibitem{Bravyi2021}
S. Bravyi, D. Gosset, and R. Movassagh.
\newblock Classical algorithms for quantum mean values.
\newblock \href{https://dx.doi.org/10.1038/s41567-020-01109-8}{Nature Physics
  {\bf 17}, 337--341}~(2021).

\bibitem{Bravyi2020}
S. Bravyi, A. Kliesch, R. Koenig, and E. Tang.
\newblock Hybrid quantum-classical algorithms for approximate graph
  coloring.
\newblock  \href{https://doi.org/10.22331/q-2022-03-30-678}{Quantum {\bf 6}, 678}~(2022).

\bibitem{Eldar2017}
L. Eldar and A.~W. Harrow.
\newblock Local Hamiltonians Whose Ground States are Hard to Approximate.
\newblock \href{https://dx.doi.org/10.1109/FOCS.2017.46}{In 2017 IEEE 58th Annual Symposium on Foundations of Computer Science (FOCS), 427--438}~(2017).
  

\bibitem{Brady2020}
L.~T. Brady, C.~L. Baldwin, A. Bapat, Y. Kharkov, and
  A.~V. Gorshkov.
\newblock Optimal Protocols in Quantum Annealing and Quantum Approximate
  Optimization Algorithm Problems.
\newblock \href{https://dx.doi.org/10.1103/PhysRevLett.126.070505}{Physical
  Review Letters {\bf 126}, 070505}~(2021).

\bibitem{Brady2021}
L.~T. Brady, L. Kocia, P. Bienias, A. Bapat, Y.  Kharkov, and A.~V. Gorshkov.
\newblock Behavior of Analog Quantum Algorithms.
\newblock  \href{https://doi.org/10.48550/arXiv.2107.01218}{arXiv:2107.01218 [quant-ph]}~(2021).

\bibitem{Venuti2021}
L.~C. Venuti, D. D'Alessandro, and D.~A. Lidar.
\newblock Optimal Control for Quantum Optimization of Closed and Open
  Systems.
\newblock \href{https://dx.doi.org/10.1103/PhysRevApplied.16.054023}{Physical
  Review Applied {\bf 16}, 054023}~(2021).

\bibitem{Childs2020}
A.~M. Childs, Y. Su, M.~C. Tran, N. Wiebe, and S. Zhu.
\newblock Theory of Trotter Error with Commutator Scaling.
\newblock \href{https://dx.doi.org/10.1103/PhysRevX.11.011020}{Physical Review
  X {\bf 11}, 011020}~(2021).

\bibitem{Nachtergaele2006}
B. Nachtergaele, Y. Ogata, and R. Sims.
\newblock Propagation of correlations in quantum lattice systems.
\newblock \href{https://dx.doi.org/10.1007/s10955-006-9143-6}{Journal of
  Statistical Physics {\bf 124}, 1--13}~(2006).

\bibitem{Nachtergaele2010}
B. Nachtergaele and R. Sims.
\newblock Lieb-Robinson bounds in quantum many-body physics.
\newblock \href{https://dx.doi.org/10.1090/conm/529/10429}{Contemporary Mathematics {\bf 529}, 141--176}~(2010).

\bibitem{Bravyi2006}
S. Bravyi, M.~B. Hastings, and F. Verstraete.
\newblock Lieb-robinson bounds and the generation of correlations and
  topological quantum order.
\newblock \href{https://doi.org/10.1103/PhysRevLett.97.050401}{Physical Review Letters {\bf 97}, 050401}~(2006).

\bibitem{Chen2019}
C.-F. Chen and A. Lucas.
\newblock Operator growth bounds from graph theory.
\newblock  \href{https://doi.org/10.1007/s00220-021-04151-6}{Communications in Mathematical Physics  {\bf 385}, pages1273–1323}~(2021).

\bibitem{Lieb1972}
E.H. Lieb and D.W. Robinson.
\newblock The finite group velocity of quantum spin systems.
\newblock \href{https://dx.doi.org/10.1007/BF01645779}{Communications in  Mathematical Physics {\bf 28}, 251--257}~(1972).

\bibitem{Haah2018}
J. Haah, M.~B. Hastings, R. Kothari, and G.~H. Low.
\newblock Quantum algorithm for simulating real time evolution of lattice
  Hamiltonians.
\newblock \href{https://dx.doi.org/10.1109/FOCS.2018.00041}{2018 IEEE 59th
  Annual Symposium on Foundations of Computer Science (FOCS),  350--360}~(2018).

\bibitem{Lubotzky1988}
A. Lubotzky, R.~Phillips, and P.~Sarnak.
\newblock Ramanujan graphs.
\newblock \href{https://dx.doi.org/10.1007/BF02126799}{Combinatorica {\bf 8}, 261--277}~(1988).

\bibitem{Mohar1989}
B. Mohar.
\newblock Isoperimetric numbers of graphs.
\newblock \href{https://dx.doi.org/10.1016/0095-8956(89)90029-4}{Journal of
  Combinatorial Theory, Series B {\bf 47}, 274--291}~(1989).

\bibitem{Marcus2015}
A.~W. Marcus, D.~A. Spielman, and N. Srivastava.
\newblock Interlacing Families IV: Bipartite Ramanujan Graphs of All
  Sizes.
\newblock \href{https://dx.doi.org/10.1109/FOCS.2015.87}{In 2015 IEEE 56th Annual Symposium on Foundations of Computer Science (FOCS), 1358--1377}~(2015).

\bibitem{Marcus2018}
A.~W. Marcus, D.~A. Spielman, and N. Srivastava.
\newblock Interlacing Families IV: Bipartite Ramanujan Graphs of All Sizes.
\newblock \href{https://dx.doi.org/10.1137/16M106176X}{SIAM Journal on  Computing {\bf 47}, 2488--2509}~(2018).

\bibitem{Hall2018}
C. Hall, D. Puder, and W.~F. Sawin.
\newblock Ramanujan coverings of graphs.
\newblock \href{https://dx.doi.org/10.1016/j.aim.2017.10.042}{Advances in Mathematics {\bf 323}, 367--410}~(2018).

\bibitem{goemans1995}
M.~X. Goemans and D.~P. Williamson.
\newblock Improved approximation algorithms for maximum cut and
  satisfiability problems using semidefinite programming.
\newblock \href{https://dx.doi.org/10.1145/227683.227684}{Journal of the ACM
  {\bf 42}, 1115--1145}~(1995).

\bibitem{Somma2012}
R.~D. Somma, D. Nagaj, and M. Kieferov{\'{a}}.
\newblock Quantum Speedup by Quantum Annealing.
\newblock \href{https://dx.doi.org/10.1103/PhysRevLett.109.050501}{Physical
  Review Letters {\bf 109}, 050501}~(2012).

\bibitem{Hastings2020}
M.~B. Hastings.
\newblock The Power of Adiabatic Quantum Computation with No Sign
  Problem.
\newblock \href{https://dx.doi.org/10.22331/q-2021-12-06-597}{Quantum {\bf 5},
  597}~(2021).

\bibitem{Gilyen2021}
A. Gily{\'{e}}n, M.~B. Hastings, and U. Vazirani.
\newblock (Sub)Exponential advantage of adiabatic Quantum computation with
  no sign problem.
\newblock \href{https://dx.doi.org/10.1145/3406325.3451060}{In Proceedings of the Annual ACM Symposium on Theory of Computing (STOC), 1357--1369}~(2021).

\bibitem{Bhatia1996}
R.~Bhatia.
\newblock Matrix analysis.
\newblock \href{https://doi.org/10.1007/978-1-4612-0653-8}{Graduate Texts in Mathematics. Springer New York} ~(1996).

\bibitem{bhatia2007}
R. Bhatia.
\newblock Positive definite matrices.
\newblock \href{https://doi.org/10.1515/9781400827787}{Princeton University Press,}~(2007).


\bibitem{McKay2004}
B.D. McKay, N.C. Wormald, and B. Wysocka.
\newblock Short Cycles in Random Regular Graphs.
\newblock \href{https://dx.doi.org/10.37236/1819}{The Electronic Journal of  Combinatorics {\bf 11}, 1--12}~(2004).

\bibitem{Kardos2012}
F. Kardo{\v{s}}, D. Kr{\'{a}}l, and J. Volec.
\newblock Maximum edge-cuts in cubic graphs with large girth and in random
  cubic graphs.
\newblock \href{https://dx.doi.org/10.1002/rsa.20471}{Random Structures {\&}
  Algorithms {\bf 41}, 506--520}~(2012).

\bibitem{Coppersmith2004}
D. Coppersmith, D. Gamarnik, M.T. Hajiaghayi, and G.B. Sorkin.
\newblock Random MAX SAT, random MAX CUT, and their phase transitions.
\newblock \href{https://dx.doi.org/10.1002/rsa.20015}{Random Structures and
  Algorithms {\bf 24}, 502--545}~(2004).

\bibitem{Dembo2017}
A. Dembo, A. Montanari, and S. Sen.
\newblock Extremal cuts of sparse random graphs.
\newblock \href{https://dx.doi.org/10.1214/15-AOP1084}{Annals of Probability
  {\bf 45}, 1190--1217}~(2017).

\end{thebibliography}
\end{document}